\documentclass[preprint,12pt]{elsarticle}

\usepackage[T1]{fontenc}
\usepackage[utf8]{inputenc}

\usepackage{lmodern}
\usepackage{microtype}
\usepackage{xcolor}

\usepackage{amsmath,amssymb,amsfonts}
\usepackage{amsthm}
\usepackage{mathtools}

\usepackage{tikz}
\usepackage{tikz-cd}
\usetikzlibrary{arrows.meta, calc, positioning}
\tikzcdset{arrow style=tikz, diagrams={>={latex}}}
\usepackage{geometry}
\usepackage{graphicx}

\usepackage{titlesec}
\makeatletter
\let\paragraph\elsparagraph
\@ifundefined{elssubparagraph}{}{\let\subparagraph\elssubparagraph}
\makeatother

\usepackage{empheq}

\usepackage{bm}
\usepackage{enumitem}
\usepackage{hyperref}
\usepackage[nameinlink,noabbrev]{cleveref}

\theoremstyle{plain}
\newtheorem{theorem}{Theorem}[section]
\newtheorem{proposition}[theorem]{Proposition}

\newtheorem{corollary}[theorem]{Corollary}

\theoremstyle{definition}
\newtheorem{definition}[theorem]{Definition}
\newtheorem{remark}[theorem]{Remark}
\theoremstyle{note}

\newcommand{\R}{\mathbb{R}}
\newcommand{\dR}{\mathrm{dR}}

\newcommand{\dd}{\mathrm{d}}
\newcommand{\grad}{\nabla}

\newcommand{\wtd}{\tilde d}
\newcommand{\wtdel}{\tilde\delta}
\newcommand{\wtD}{\tilde\Delta}

\journal{}

\begin{document}

\begin{frontmatter}

\title{Homothetic Hodge–de Rham Theory and a Geometric Regularization of Elliptic Boundary Value Problems}

\author{Fereidoun Sabetghadam}
\address{%
  \textit{Mechanical Engineering Faculty, Science and Research Branch, IAU, Tehran, Iran}\\
  \textit{Email: fsabet@srbiau.ac.ir}
}

\begin{abstract}
We introduce a homothetic extension of classical Weyl–integrable geometry by generalizing the usual linear gauge transformations to affine homothetic transformations centered at a distinguished harmonic, scale–invariant form $\alpha_d$. After re–linearizing these affine gauge transformations via a suitable shift of variables, we obtain a twisted exterior calculus that is
structurally equivalent to the Witten deformation of the de~Rham complex. On this basis we develop a corresponding homothetic Hodge theory: we define a twisted adjoint and homothetic Laplacian, and prove a homothetic Hodge decomposition theorem on compact Riemannian manifolds. In the context of partial differential equations, we show that the scalar homothetic Laplacian provides a rigorous diffuse–interface (volume–penalization) representation of elliptic boundary value problems. Modeling the Weyl scale field as a fixed distribution localized near a hypersurface, the resulting
lower–order geometric terms form a penalty layer that enforces Dirichlet, Neumann, or Cauchy data within a single geometric equation. This formulation yields consistent weak solutions even in the presence of classically incompatible Cauchy data. As an application, we construct a non–singular model for point sources in elliptic field equations, which preserves the correct Coulombian far–field while removing the core singularity and yielding finite field energy.
\end{abstract}

\begin{keyword}
\MSC[2020]{53C07, 58A12, 58J05, 35J25}\\
Weyl-integrable geometry \sep Witten deformation \sep weighted Hodge theory \sep twisted de Rham complex \sep homothetic Laplacian \sep elliptic boundary value problems
\end{keyword}

\end{frontmatter}

\section{Introduction}\label{sec:intro}

Classical Weyl geometry was originally proposed as an elegant attempt to unify gravitation and electromagnetism by introducing local scale invariance \cite{Weyl1918}. While its original physical interpretation was largely superseded, Weyl-integrable geometry has since found profound applications in modern physics, particularly in scalar-tensor theories of gravity \cite{RomeroFonsecaNetoPucheu2012}, conformal manifolds, and symmetry reduction in gauge theories \cite{KarananasMonin2016, Sabetghadam2025}.

Beyond its role in scalar–tensor gravity and conformal geometry,
Weyl–integrable structures naturally accommodate generalized notions of scale symmetry
that are well suited to geometric formulations of differential operators and field
equations. Motivated by this perspective, we introduce in this work a \emph{homothetic} extension
of Weyl–integrable geometry. Instead of the usual linear scaling of a $p$-form, $\alpha \mapsto e^{w\sigma}\alpha$, we postulate an affine transformation characterized by a distinguished, harmonic, scale-invariant $p$-form, $\alpha_d$, which acts as a ``homothety center,'' i.e., a fixed point of the affine scaling. By re-linearizing this transformation via shifting the  variables, we naturally induce a twist in the exterior calculus. Interestingly, this twisted differential conceptually and structurally coincides with the exact Witten deformation of the de Rham complex \cite{Witten1982, Roe1998}. This geometric structure provides the mathematical foundation for our first main result: the construction of a homothetic Hodge--de Rham theory, complete with a corresponding twisted adjoint, a homothetic Laplacian, and a modified Hodge decomposition theorem on compact Riemannian manifolds.

Beyond its intrinsic geometric interest, this homothetic Hodge theory offers a novel and rigorous mechanism for tackling classical elliptic boundary value problems. Traditional approaches to imposing boundary conditions typically rely on strict domain truncation and trace operators. Here, we demonstrate that the scalar homothetic Laplace equation can act as a natural volume-penalization or diffuse-interface method \cite{Mittal2008}. By considering the Weyl potential $\lambda$ as a localized, fixed distribution concentrated along a hypersurface, the lower-order terms of the homothetic operator act as a thin penalty layer. We show that this single geometric equation can seamlessly enforce Cauchy, Dirichlet, or Neumann boundary data, yielding mathematically consistent weak transmission solutions even when Cauchy data are classically incompatible.

A compelling mathematical physics application is the regularization of singular point sources \cite{BornInfeld1934, Jackson1998}. By framing the point source within our homothetic Laplace equation and enforcing boundary data on a regularized hollow-sphere interface, we obtain a non-singular model that decouples the interior and exterior fields. The resulting potential maintains the correct Coulombic behavior in the far field while eliminating the singularity at the origin, yielding a finite total field energy.

The paper is organized as follows. Section \ref{sec:setting} reviews classical Weyl-integrable geometry and introduces our homothetic extension. In Section \ref{sec:relation-witten}, we define the twisted differential operators and establish their relation to Witten deformations. Section \ref{sec:homothetic-derham} develops the homothetic de Rham complex, followed by the formulation of the homothetic Hodge theory and its decomposition theorem in Section \ref{sec:homothetic-hodge}. In Section \ref{sec:scalar-pde}, we derive the explicit scalar homothetic Laplace equation. Section \ref{PDE-Appl} explores the PDE applications, demonstrating how localized scale fields enforce interface conditions and resolving the point-source singularity. Finally, Section \ref{sec:Conclusions} concludes the paper, and \ref{lambda_bcs} provides the ODE analysis governing the localization of the scale field.

\section{A homothetic extension of Weyl-Integrable Geometry}
\label{sec:setting}
The formulation is provided consistent with our previous formulations \cite{Sabetghadam2025}, which in turn, is consistent with the major scalar-tensor theories of gravity (see e.g., \cite{RomeroFonsecaNetoPucheu2012}). 

\subsection{Classical Weyl-integrable structure}

Let $M$ be a smooth $d$-dimensional manifold. A \emph{Weyl structure} \cite{Weyl1918} on $M$ consists of a conformal class of (pseudo)-Riemannian metrics $[g]$ together with a one-form $\phi\in\Omega^{1}(M)$, and a torsion-free connection $\hat\nabla$ satisfying the \emph{non-metricity} condition
\begin{equation}\label{eq:weyl-nonmetricity}
\hat\nabla g \;=\; -2\,\phi\otimes g .
\end{equation}
Given $(M,[g],\phi)$, a \emph{Weyl gauge} (scale) transformation is specified by 
\begin{equation}\label{eq:weyl-gauge}
g \longmapsto e^{-2\sigma}g,
\qquad
\phi \longmapsto \phi + d\sigma,
\end{equation}
in which $\sigma\in C^\infty(M)$. Under \eqref{eq:weyl-gauge}, the connection $\hat\nabla$ remains unchanged, and \eqref{eq:weyl-nonmetricity} is preserved.

The Weyl structure is called \emph{integrable} if $d\phi=0$ \cite{SalimSautu1996,RomeroFonsecaNetoPucheu2012}. Therefore, it is \emph{exact} globally if $\phi=d\lambda$ for some smooth function $\lambda\in C^\infty(M)$. 

Along with the gauge freedom \eqref{eq:weyl-gauge}, one usually consider the transformation
\begin{equation}\label{eq:Transf_Classic}
\alpha \mapsto e^{-w\sigma}\alpha,
\end{equation}
for a differential $p$-form $\alpha \in \Omega^p(M)$ \cite{KarananasMonin2016}, in which $w$ is called the \emph{Weyl weight} and in physics it usually determines such that a physical theory be scale invariant.   

In the globally exact case $\phi=d\lambda$, choosing $\sigma=-\lambda$ in \eqref{eq:weyl-gauge} produces the \emph{Einstein (Riemann) gauge}, in which
\begin{equation}\label{eq:einstein-gauge}
\phi_{E}=0,
\qquad
g_{E}=e^{2\lambda}g .
\end{equation}
In this gauge, the Weyl connection $\hat\nabla$ reduces to the Levi--Civita connection of the metric $g_{E}$.

\subsection{Homothetic extension: affine scaling with a fixed point}
Here we propose our extension to the above classical Weyl integrable geometry by generalization of the linear transformation \eqref{eq:Transf_Classic} to a homothetic one.

Fix $p\in\{0,1,\dots,d\}$. Assume there exists a distinguished form $\alpha_d\in\Omega^p(M)$ that is \emph{scale invariant} under \eqref{eq:weyl-gauge}. We interpret $\alpha_d$ as a ``homothety center'' (fixed point). Let $\alpha_E\in\Omega^p(M)$ denote a $p$-form field in the Einstein gauge (i.e., for $\phi=\phi_E=0$). Given a Weyl weight $w\in\R$, first we define the \emph{homothetically scaled form}
\begin{equation}\label{eq:affine-homothety}
\alpha \;:=\; e^{-w\lambda}\alpha_E + (1-e^{-w\lambda})\alpha_d
\;=\; e^{-w\lambda}(\alpha_E-\alpha_d)+\alpha_d.
\end{equation}
Then, we postulate the gauge transformation
\begin{equation}\label{eq:Weyl-extension}
\boxed{
g_E \mapsto g:=e^{-2\lambda}g_E, \qquad \phi=0 \mapsto 0+d \lambda, \qquad \alpha_E \mapsto \alpha.}
\end{equation}
Therefore, for the Weyl-integrable manifold $(M,[g],\phi)$, the Weyl gauge (scale) freedom \eqref{eq:weyl-gauge}--\eqref{eq:Transf_Classic} is changed to 
\begin{equation}\label{eq:Weyl-extension-2}
g \mapsto g:=e^{-2\sigma}g, \quad \phi \mapsto \phi+d \sigma, \quad \alpha_E \mapsto e^{-w(\lambda+\sigma)}\alpha_E+(1-e^{-w(\lambda+\sigma)})\alpha_d.
\end{equation}

We call this setting the homothetic extension of the Weyl integrable geometry, and we shall show that a consistent Hodge-de Rham theory can be constructed on it.
\begin{remark} ~

\begin{enumerate}
\item For the special case $\alpha_d=0$, the conventional Weyl gauge freedom \eqref{eq:weyl-gauge}-\eqref{eq:Transf_Classic} is retrieved for $(M,[g],\phi)$
\begin{equation}
g \mapsto g:=e^{-2\sigma}g, \qquad \phi \mapsto \phi+d \sigma, \qquad \alpha \mapsto e^{-w\sigma}\alpha.
\end{equation}
However, for general $\alpha_d$, \eqref{eq:Weyl-extension} introduces a homothety on the affine space $\alpha_d+\Omega^p(M)$, with the homothety center $\alpha_d$ and scale factor $e^{-w\lambda}$.
\item On the above setting, the metric tensor is a 2-form with zero homothety center $g_d=0$, and the Weyl weight $w=2$.
\item We should emphasize that $\hat{\nabla}$ is still invariant under the scale freedom \eqref{eq:Weyl-extension-2}.  
\end{enumerate}

\end{remark}

However, since transformation \eqref{eq:affine-homothety} is not linear, but affine, the transformation of the differential forms in \eqref{eq:Weyl-extension-2} is not linear as well. So,  we first re-linearize it.

\subsection{Linearization by shifting}
To recover a linear nature of the transformations (under the gauge freedom), we define the shifted variables
\begin{equation}\label{eq:beta-shift}
\beta_E \;:=\; \alpha_E-\alpha_d,\qquad \beta \;:=\; \alpha-\alpha_d.
\end{equation}
Then \eqref{eq:affine-homothety} becomes linear again
\begin{equation}\label{eq:beta-linear}
\beta \;=\; e^{-w\lambda}\beta_E.
\end{equation}

\begin{remark}[Relation to compensator/dressing]
After the shift $\beta:=\alpha-\alpha_d$ (and similarly $\beta_E:=\alpha_E-\alpha_d$), the affine homothety about the fixed point  $\alpha_d$ becomes a purely multiplicative Weyl scaling, $\beta=e^{-w\lambda}\beta_E$. This is structurally the same mechanism as the standard compensator (dilaton) / Stueckelberg realization of local scale invariance, where one forms Weyl-covariant ``dressed'' variables by multiplying by an exponential of the scale field; see e.g.~\cite{Ghilencea2020WeylStueckelberg}. It also fits the general viewpoint of the dressing field method (DFM) for gauge-symmetry reduction, which has been applied in conformal/Weyl Cartan geometry (including the extraction of a dilaton from tractor data)~\cite{Attard2017DFM,Francois2019DilatonTractor}.
\end{remark}

The linearization may be achieved via adding the dimension, as we have done before \cite{Sabetghadam2025}. The resulting formulations are equivalent, but the set up of  \cite{Sabetghadam2025} is more clear for the physics applications (because it separates clearly the group of symmetry and the space of solutions). The present linearization by shifting stresses more the mathematical foundations, and therefore is more suitable for the PDE application. 

\noindent
Now, we can construct differential operators acting on such spaces.

\section{Twisted Operators} \label{sec:relation-witten}

We shall construct a Hodge-de Rham theory based on the above geometry. To this end, we define some twisted operators. Similar operators were used widely in the literature for the other purposes \cite{Witten1982,Novikov1981,Gover2005}.  

We fix a background (pseudo)-Riemannian metric $g_E$, in the Einstein frame, and twist the exterior calculus by the scalar field $\lambda$.
Define the twisted exterior derivative
\begin{equation}\label{eq:tilde-d}
\wtd \;:=\; e^{-w\lambda}\,\dd\,e^{w\lambda}
\;=\; \dd + w\,\dd\lambda\wedge .
\end{equation}
It satisfies the identity
\begin{equation}\label{eq:intertwine}
\wtd\!\left(e^{-w\lambda}\beta_E\right)=e^{-w\lambda}\dd\beta_E,
\end{equation}
and, since $\dd\phi=0$ with $\phi=\dd\lambda$, it is nilpotent:
\begin{equation}\label{eq:tilde-d-nilp}
\boxed{
\wtd^{\,2}=0.}
\end{equation}

Let $\delta$ and $\Delta=\dd\delta+\delta\dd$ denote the usual Hodge codifferential and Hodge Laplacian for the fixed metric $g_E$. We use the weighted pairing
\begin{equation}\label{eq:weighted-pairing}
\langle \eta,\xi\rangle_{\lambda,w} \;:=\; \int_M e^{2w\lambda}\,\eta\wedge *\xi,
\end{equation}
and define the corresponding twisted adjoint and Laplacian by conjugation:
\begin{equation}\label{eq:tilde-delta-tilde-Delta}
\wtdel \;:=\; e^{-w\lambda}\,\delta\,e^{w\lambda},
\qquad
\wtD \;:=\; \wtd\,\wtdel+\wtdel\,\wtd
\;=\; e^{-w\lambda}\,\Delta\,e^{w\lambda}.
\end{equation}
Conjugation immediately implies 
\begin{equation}\label{eq:nilp_delta}
\boxed{
\wtdel^{\,2}=0}
\end{equation}
and yields a self-adjoint elliptic operator $\tilde\Delta$ on $p$-forms under standard hypotheses (made precise later).

These three twisted operators are enough for our final goals. However, before developing the theory further, lets specify their relations with the other theories. 

\subsection{Relation to the other works}

Set
\[
f := w\lambda,\qquad S(\eta):=e^{f}\eta .
\]
Then our twisted differential
\[
\tilde d \;=\; e^{-f}\, d\, e^{f} \;=\; S^{-1} d S
\]
is precisely the \emph{Witten deformation} (exact twist) of the de~Rham differential \cite{Witten1982}.
More generally, one may twist by a closed $1$-form $\theta$ via $d_\theta:=d+\theta\wedge$, which yields the
Morse--Novikov/Lichnerowicz cohomology in the non-exact case \cite{Novikov1981,Farber2004}.
(We restrict throughout to the globally exact Weyl-integrable case $\phi=d\lambda$.)  Distinct, genuinely
\emph{conformally invariant} ``detour''/Branson--Gover-type de~Rham complexes also exist in conformal geometry,
but they are different objects from the present first-order twist \cite{BransonGover2003,Gover2005}.

\medskip
\noindent\textbf{Convention (adjoint vs.\ inner product).} There are two common ways to incorporate the Witten twist at the Hodge setting. In this paper we encode the twist in the Hilbert-space structure by using the weighted pairing \eqref{eq:weighted-pairing}, so that the formal adjoint and Laplacian satisfy the clean conjugation identities
\[
\tilde\delta = e^{-f}\,\delta\,e^{f}=S^{-1}\delta S,
\qquad
\tilde\Delta=\tilde d\,\tilde\delta+\tilde\delta\,\tilde d
= e^{-f}\,\Delta\,e^{f}=S^{-1}\Delta S .
\]
Many references instead keep the unweighted $L^2$ pairing and work with the corresponding adjoint
$d_f^{\,*}=e^{f}\delta e^{-f}$.

\section{Homothetic de Rham Complex} \label{sec:homothetic-derham}

\subsection{The homothetic (twisted) complex}

Fix $w\in\R$ and a smooth function $\lambda\in C^\infty(M)$ such that $\phi=\dd\lambda$.
Let $\wtd=e^{-w\lambda}\dd\, e^{w\lambda}$ be the twisted differential defined in \eqref{eq:tilde-d};
equivalently, $\wtd$ is the exact (Witten-type) twist of $\dd$ \cite{Witten1982}.
Since $\wtd^{\,2}=0$ (cf.\ \eqref{eq:tilde-d-nilp}), the sequence
\begin{equation}\label{eq:homothetic-complex}
0 \longrightarrow \Omega^0(M)\xrightarrow{\ \wtd\ }\Omega^1(M)\xrightarrow{\ \wtd\ }\cdots
\xrightarrow{\ \wtd\ }\Omega^d(M)\longrightarrow 0
\end{equation}
is a cochain complex, which we call the \emph{homothetic de~Rham complex}.

\begin{definition}[Homothetic de~Rham cohomology]\label{def:hom-cohom}
For $p=0,1,\dots,d$, define
\begin{equation}\label{eq:hom-cohom}
H^p_{\mathrm{hom}}(M;\lambda,w)
\;:=\;
\frac{\ker\big(\wtd:\Omega^p(M)\to\Omega^{p+1}(M)\big)}
     {\operatorname{im}\big(\wtd:\Omega^{p-1}(M)\to\Omega^{p}(M)\big)}.
\end{equation}
\end{definition}

\subsection{Chain isomorphism and relation to de Rham cohomology}
Define the multiplication operator
\begin{equation}\label{eq:S-operator}
S:\Omega^p(M)\to\Omega^p(M),\qquad S(\eta):=e^{w\lambda}\eta.
\end{equation}
Then $\wtd=S^{-1}\dd S$ and hence $S$ is a chain map between $\wtd$ and $\dd$:
\begin{equation}\label{eq:chain-map}
\boxed{
S\circ \wtd \;=\; \dd \circ S.}
\end{equation}
In particular, $S$ induces an isomorphism on cohomology whenever $\lambda$ is globally defined.

\begin{proposition}[Cohomological equivalence]\label{prop:cohom-equivalence}
Assume $\lambda\in C^\infty(M)$ is globally defined. Then multiplication by $e^{w\lambda}$ induces a canonical isomorphism
\begin{equation}\label{eq:Hiso}
H^p_{\mathrm{hom}}(M;\lambda,w)\;\cong\; H^p_{\dR}(M),
\qquad p=0,1,\dots,d.
\end{equation}
\end{proposition}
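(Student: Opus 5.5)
The plan is to show that the multiplication operator $S$ of \eqref{eq:S-operator} is an isomorphism of cochain complexes, from the homothetic complex \eqref{eq:homothetic-complex} to the ordinary de~Rham complex $(\Omega^\bullet(M),\dd)$. The cohomology isomorphism then follows formally. There are three steps.

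\textbf{Step 1: $S$ is a chain map.} Since $\lambda$ is globally defined and smooth, $e^{w\lambda}$ is a smooth, everywhere positive function on $M$. Hence $S$ maps $\Omega^p(M)$ to itself in each degree. By the Leibniz rule, for $\eta\in\Omega^p(M)$,
\[
\dd(e^{w\lambda}\eta)=e^{w\lambda}\bigl(\dd\eta+w\,\dd\lambda\wedge\eta\bigr)=e^{w\lambda}\,\wtd\eta ,
\]
which is exactly \eqref{eq:chain-map}, that is, $S\circ\wtd=\dd\circ S$.

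\textbf{Step 2: $S$ has a chain inverse.} Multiplication by $e^{-w\lambda}$ is also smooth and globally defined, and it is a two-sided inverse $S^{-1}$ in every degree. Multiplying \eqref{eq:chain-map} by $S^{-1}$ on both sides gives $\wtd\circ S^{-1}=S^{-1}\circ\dd$. So $S^{-1}$ is a chain map as well.

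\textbf{Step 3: passage to cohomology.} From \eqref{eq:chain-map}, if $\wtd\eta=0$ then $\dd(S\eta)=S\wtd\eta=0$, so $S$ sends $\ker\wtd$ into $\ker\dd$. Likewise, if $\eta=\wtd\zeta$ then $S\eta=\dd(S\zeta)$, so $S$ sends $\operatorname{im}\wtd$ into $\operatorname{im}\dd$. Therefore $[\eta]\mapsto[e^{w\lambda}\eta]$ is a well-defined linear map
\[
H^p_{\mathrm{hom}}(M;\lambda,w)\to H^p_{\dR}(M).
\]
The same argument applied to $S^{-1}$ gives a map in the opposite direction, and the two induced maps compose to the identity on both sides. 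This yields \eqref{eq:Hiso} for every $p$.

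The map is canonical in the sense that it depends only on the data $(\lambda,w)$ through the operator $S$ itself, with no choices of representatives. If one wants independence from shifting $\lambda$ by a constant $c$, note that this only rescales $S$ by the nonzero scalar $e^{wc}$.

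The computations are routine. The only point needing care is global definedness. If $\lambda$ existed only locally, so that $\phi$ was closed but not exact, then $e^{w\lambda}$ would not be a global function, $S$ would not exist, and the result genuinely fails: one obtains Novikov cohomology instead. The hypothesis of the proposition removes exactly this obstacle.
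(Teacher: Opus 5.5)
Your proof is correct and is essentially the paper's argument: both rest on the conjugation $\wtd=S^{-1}\dd S$, with $S$ invertible because $\lambda$ is globally defined. The paper states this as $S$ mapping $\ker\wtd$ onto $\ker\dd$ and $\operatorname{im}\wtd$ onto $\operatorname{im}\dd$, and your chain-inverse argument in Steps 2--3 establishes exactly that.
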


\begin{proof}
Let $\eta\in\Omega^p(M)$. Since $\wtd=S^{-1}\dd S$, we have $\wtd\eta=0$ if and only if $\dd(S\eta)=0$, and $\eta=\wtd\xi$ if and only if $S\eta=\dd(S\xi)$.
Thus $S$ maps $\ker\wtd$ onto $\ker\dd$ and $\operatorname{im}\wtd$ onto $\operatorname{im}\dd$, yielding \eqref{eq:Hiso}.
\end{proof}

\subsection{Homothetic closed and exact forms}
Recall the affine homothety $\alpha=e^{-w\lambda}(\alpha_E-\alpha_d)+\alpha_d$ and the shifted field $\beta=\alpha-\alpha_d=e^{-w\lambda}\beta_E$. The twisted differential $\wtd$ is designed so that $\wtd \beta$ transforms with the same weight as $\dd\beta_E$ in the Einstein gauge:
\begin{equation}\label{eq:intertwine-again}
\wtd \beta \;=\; \wtd\!\left(e^{-w\lambda}\beta_E\right)=e^{-w\lambda}\dd\beta_E. 
\end{equation}
Accordingly, in the homothetic gauge picture one naturally regards $\beta$ as \emph{homothetically closed} if $\wtd \beta=0$ and \emph{homothetically exact} if $\beta=\wtd \gamma$ for some $\gamma$.

\section{Homothetic Hodge Theory} \label{sec:homothetic-hodge}

\subsection{Weighted pairing and twisted adjoint}
Fix a background (pseudo)-Riemannian metric $g_E$ on $M$. Let $*$ denote the corresponding Hodge star, and let $\delta$ be the usual Hodge codifferential (the $L^2$-adjoint of $\dd$ with respect to $\int \eta\wedge *\xi$).Throughout this section, whenever we appeal to adjointness/integration by parts, we either assume $M$ is compact without boundary, or we work with compactly supported forms, so that the relevant boundary terms vanish and the pairings below are finite.
For $\lambda\in C^\infty(M)$ and weight $w\in\R$, define the weighted inner product
\begin{equation}\label{eq:weighted-inner-product}
\langle \eta,\xi\rangle_{\lambda,w}
\;:=\;
\int_M e^{2w\lambda}\,\eta\wedge *\xi,
\qquad \eta,\xi\in\Omega^p(M). \tag{\ref{eq:weighted-pairing} revisited}
\end{equation}
On noncompact $M$ this is understood on $\Omega^p_c(M)$, or on a weighted $L^2$-domain where the integral converges.

\begin{proposition}[Twisted adjoint by conjugation]\label{prop:twisted-adjoint}
With respect to \eqref{eq:weighted-pairing}, the formal adjoint of $\wtd=e^{-w\lambda}\dd e^{w\lambda}$ is
\begin{equation}\label{eq:tilde-delta}
\wtdel \;:=\; e^{-w\lambda}\,\delta\,e^{w\lambda}.
\end{equation}
And we have $\wtdel^{\,2}=0$. 
\end{proposition}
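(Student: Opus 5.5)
The plan is to reduce everything to the classical, unweighted $L^2$ adjointness of $\dd$ and $\delta$, using the multiplication operator $S(\eta)=e^{w\lambda}\eta$ from \eqref{eq:S-operator}. The key observation is that the weighted pairing \eqref{eq:weighted-pairing} is the standard pairing $(\eta,\xi):=\int_M \eta\wedge *\xi$ pulled back by $S$. Indeed, since $e^{w\lambda}$ is a scalar function, it commutes with $\wedge$ and $*$, so
\[
\langle \eta,\xi\rangle_{\lambda,w}=\int_M (e^{w\lambda}\eta)\wedge *(e^{w\lambda}\xi)=(S\eta,S\xi).
\]
In other words, $S$ is an isometry from the weighted space onto the unweighted one. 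Throughout I work under the standing hypotheses of this section: either $M$ is compact without boundary, or one of the forms is compactly supported. Multiplication by the smooth positive function $e^{\pm w\lambda}$ preserves compact support and smoothness, so both $S$ and $S^{-1}$ map $\Omega^p_c(M)$ to itself.

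\textbf{Adjointness.} Take $\eta\in\Omega^{p-1}(M)$ and $\xi\in\Omega^p(M)$, with one of them compactly supported if $M$ is noncompact. Using $\wtd=S^{-1}\dd S$ (the chain relation \eqref{eq:chain-map}) and the definition $\wtdel=S^{-1}\delta S$, I will compute
\[
\langle \wtd\eta,\xi\rangle_{\lambda,w}=(S S^{-1}\dd S\eta,\,S\xi)=(\dd(S\eta),\,S\xi)=(S\eta,\,\delta(S\xi))=(S\eta,\,S\,S^{-1}\delta S\xi)=\langle \eta,\wtdel\xi\rangle_{\lambda,w}.
\]
The middle equality is the classical fact that $\delta$ is the formal $L^2$-adjoint of $\dd$ for $g_E$. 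It applies because $S\eta$ and $S\xi$ are smooth and inherit the compact-support assumption, so the boundary term from Stokes' theorem vanishes. This exhibits $\wtdel$ as a formal adjoint of $\wtd$. Uniqueness of the formal adjoint follows because the weighted pairing is nondegenerate on $\Omega^p_c(M)$, the weight $e^{2w\lambda}$ being strictly positive.

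\textbf{Nilpotency.} This is immediate from conjugation: $\wtdel^{\,2}=S^{-1}\delta S S^{-1}\delta S=S^{-1}\delta^2 S=0$, since $\delta^2=0$. Alternatively, it follows by taking the adjoint of $\wtd^{\,2}=0$ from \eqref{eq:tilde-d-nilp}.

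There is no genuine obstacle here. The only point needing care is the integration-by-parts step, where one must confirm that the twisted forms $S\eta,S\xi$ still satisfy the support or decay hypotheses. In the pseudo-Riemannian case, the pairing $(\cdot,\cdot)$ is merely nondegenerate rather than positive definite, but the formal-adjoint argument goes through unchanged. For completeness, I would also note the expanded form $\wtdel=\delta-w\,\iota_{\grad\lambda}$, up to the sign convention for $\delta$; it follows from $\delta(f\xi)=f\delta\xi-\iota_{\grad f}\xi$, but it is not needed for the proof.
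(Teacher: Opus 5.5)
Your proposal is correct and follows essentially the same route as the paper. Both arguments rewrite the weighted pairing as the unweighted pairing of $e^{w\lambda}\eta$ and $e^{w\lambda}\xi$, invoke the classical $L^2$-adjointness of $\dd$ and $\delta$, and deduce $\wtdel^{\,2}=0$ from $\delta^2=0$ by conjugation; your framing of $S$ as an isometry and your remarks on uniqueness and support preservation are harmless additions.
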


\begin{proof}
Let $\eta\in\Omega^{p}(M)$ and $\xi\in\Omega^{p+1}(M)$ with compact support (or assume hypotheses ensuring vanishing of boundary terms, e.g.\ $M$ compact without boundary). Set $\hat\eta:=e^{w\lambda}\eta$ and $\hat\xi:=e^{w\lambda}\xi$.
Then
\[
\langle \wtd\eta,\xi\rangle_{\lambda,w}
=
\int_M e^{2w\lambda}\,(e^{-w\lambda}\dd\hat\eta)\wedge *\xi
=
\int_M (\dd\hat\eta)\wedge *\hat\xi.
\]
By definition of $\delta$ as the adjoint of $\dd$ (for the unweighted pairing), this equals
\[
\int_M \hat\eta\wedge *(\delta\hat\xi)
=
\int_M e^{2w\lambda}\,\eta\wedge *\big(e^{-w\lambda}\delta(e^{w\lambda}\xi)\big)
=
\langle \eta,\wtdel\xi\rangle_{\lambda,w}.
\]
Finally, $\wtdel^{\,2}=0$ follows from $\delta^2=0$ and the conjugation \eqref{eq:tilde-delta}.
\end{proof}

\subsection{Homothetic Laplacian and conjugation}
Define the homothetic (Witten form) Laplacian by
\begin{equation}\label{eq:tilde-laplacian}
\wtD \;:=\; \wtd\,\wtdel+\wtdel\,\wtd.
\end{equation}
Let $\Delta:=\dd\delta+\delta\dd$ be the usual Hodge Laplacian for the fixed metric $g_E$.

\begin{proposition}[Conjugation identity]\label{prop:conjugation}
Let $S(\eta):=e^{w\lambda}\eta$.
Then
\begin{equation}\label{eq:conjugation}
\wtd = S^{-1}\dd S,\qquad
\wtdel = S^{-1}\delta S,\qquad
\wtD = S^{-1}\Delta S.
\end{equation}
\end{proposition}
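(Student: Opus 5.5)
The three identities follow directly from the definitions, so the plan is short. $S$ is multiplication by the nowhere-vanishing smooth function $e^{w\lambda}$. It is therefore invertible on each $\Omega^p(M)$, with $S^{-1}\eta = e^{-w\lambda}\eta$, and it commutes with the Hodge star $*$ because $*$ is $C^\infty(M)$-linear.

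\textbf{First identity.} This is just the definition \eqref{eq:tilde-d} read as an operator composition: $\wtd\eta = e^{-w\lambda}\dd(e^{w\lambda}\eta) = S^{-1}\dd S\eta$. As a consistency check, I will expand with the Leibniz rule, $\dd(e^{w\lambda}\eta)=e^{w\lambda}(\dd\eta + w\,\dd\lambda\wedge\eta)$. This recovers $\wtd=\dd+w\,\dd\lambda\wedge$, matching \eqref{eq:tilde-d}.

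\textbf{Second identity.} By Proposition~\ref{prop:twisted-adjoint}, the formal adjoint of $\wtd$ with respect to \eqref{eq:weighted-pairing} is $e^{-w\lambda}\delta e^{w\lambda}$, which is $S^{-1}\delta S$ in operator notation. To avoid any circularity with the definition \eqref{eq:tilde-delta}, I will note that the adjointness computation in that proposition already shows the weighted adjoint coincides with $S^{-1}\delta S$. So the second identity is simply a restatement, and I will cite it rather than recompute.

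\textbf{Third identity.} Here I substitute the first two identities into \eqref{eq:tilde-laplacian} and use $SS^{-1}=\mathrm{id}$ to telescope:
\[
\wtd\,\wtdel+\wtdel\,\wtd = S^{-1}\dd S S^{-1}\delta S + S^{-1}\delta S S^{-1}\dd S = S^{-1}(\dd\delta+\delta\dd)S = S^{-1}\Delta S.
\]

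\textbf{Main obstacle.} There is no genuine obstacle; the only point needing care is bookkeeping. The identities are between operators on smooth forms, and all compositions are well defined because $S$ preserves $\Omega^p(M)$ degree by degree. Boundary or compactness hypotheses enter only through the adjointness claim already settled in Proposition~\ref{prop:twisted-adjoint}. As a sanity check, I would also verify that the weighted pairing makes $S$ an isometry onto the unweighted $L^2$ pairing, $\langle\eta,\xi\rangle_{\lambda,w}=\int_M S\eta\wedge *S\xi$. This explains conceptually why conjugation by $S$ carries the formal adjoint $\delta$ to $\wtdel$ and $\Delta$ to $\wtD$.
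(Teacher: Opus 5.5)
Your proposal is correct and follows the paper's proof. The first two identities are definitional; the paper defines $\wtdel:=e^{-w\lambda}\delta e^{w\lambda}$ outright, so your detour through Proposition~\ref{prop:twisted-adjoint} is harmless but unnecessary. The Laplacian identity then follows from the same telescoping computation using $S S^{-1}=\mathrm{id}$.
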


\begin{proof}
The first two identities are the definitions of $\wtd$ and $\wtdel$.
For the Laplacian,
\[
\wtD
= (S^{-1}\dd S)(S^{-1}\delta S) + (S^{-1}\delta S)(S^{-1}\dd S)
= S^{-1}(\dd\delta+\delta\dd)S
= S^{-1}\Delta S.
\]
\end{proof}

\begin{remark}[Ellipticity and lower-order terms]\label{rem:elliptic}
Since $\wtD$ is conjugate to $\Delta$ by the pointwise multiplication $S=e^{w\lambda}$, it has the same principal symbol as $\Delta$ and hence is elliptic whenever $\Delta$ is elliptic (e.g.\ for Riemannian $g_E$). The dependence on $\lambda$ enters only through lower-order terms.
\end{remark}

\subsection{Homothetic harmonic forms and Hodge decomposition}
Define the space of homothetic harmonic $p$-forms by
\begin{equation}\label{eq:hom-harmonic}
\mathcal{H}^p_{\mathrm{hom}}(M;\lambda,w)
\;:=\;
\ker\big(\wtD:\Omega^p(M)\to\Omega^p(M)\big).
\end{equation}

\begin{proposition}[Transport of harmonic forms]\label{prop:harmonic-transport}
Assume $g_E$ is Riemannian. If $M$ is compact without boundary, or if one works in a class of forms for which $\wtD$ and $\Delta$ are defined and the conjugation \eqref{eq:conjugation} is valid (e.g.\ compact support or suitable decay/boundary conditions), then
\begin{equation}\label{eq:harmonic-transport}
\mathcal{H}^p_{\mathrm{hom}}(M;\lambda,w)
\;=\;
e^{-w\lambda}\,\mathcal{H}^p(M;g_E),
\end{equation}
where $\mathcal{H}^p(M;g_E)=\ker(\Delta:\Omega^p\to\Omega^p)$ denotes the usual space of harmonic $p$-forms for $g_E$.
\end{proposition}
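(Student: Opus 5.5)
The argument is a direct transport through the conjugation identity of Proposition~\ref{prop:conjugation}, namely $\wtD = S^{-1}\Delta S$ with $S(\eta)=e^{w\lambda}\eta$. Because $\lambda$ is smooth and globally defined, $S$ is a bijection of $\Omega^p(M)$ onto itself, with inverse $S^{-1}(\eta)=e^{-w\lambda}\eta$. On the restricted classes mentioned in the statement, $S$ also preserves the class. Compact support is preserved exactly. For decay or boundary conditions we need $e^{\pm w\lambda}$ to be bounded together with their derivatives, which holds automatically when $M$ is compact.

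The main step is a two-inclusion argument. Let $\eta\in\mathcal{H}^p_{\mathrm{hom}}(M;\lambda,w)$, so $\wtD\eta=0$. Then $S^{-1}\Delta S\eta=0$, and since $S^{-1}$ is multiplication by the nowhere-vanishing function $e^{-w\lambda}$, this gives $\Delta(e^{w\lambda}\eta)=0$. Hence $h:=e^{w\lambda}\eta\in\mathcal{H}^p(M;g_E)$ and $\eta=e^{-w\lambda}h$. Conversely, if $h\in\mathcal{H}^p(M;g_E)$, then
\[
\wtD(e^{-w\lambda}h)=S^{-1}\Delta S(S^{-1}h)=e^{-w\lambda}\Delta h=0 .
\]
Together the two inclusions give the equality \eqref{eq:harmonic-transport}. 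The map $S^{-1}$ is linear and injective, so it is in fact a linear isomorphism $\mathcal{H}^p(M;g_E)\to\mathcal{H}^p_{\mathrm{hom}}(M;\lambda,w)$. In the compact case, combining this with Hodge's theorem and Proposition~\ref{prop:cohom-equivalence} shows that the dimensions equal the Betti numbers.

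The only delicate point is the function-space bookkeeping in the noncompact setting, where "harmonic" is understood inside a specified class. There I would simply require that the class be stable under multiplication by $e^{\pm w\lambda}$, for instance smooth forms, compactly supported forms, or $\lambda$ with bounded derivatives. I would state this explicitly as the meaning of "the conjugation \eqref{eq:conjugation} is valid." Riemannian signature is not needed for the algebra itself; it only matters for the interpretation through ellipticity (Remark~\ref{rem:elliptic}) and the identification of harmonic forms with $\ker \dd\cap\ker\delta$ in the compact case. I expect no real obstacle here.
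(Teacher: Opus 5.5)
Your proposal is correct and matches the paper's proof: both use the conjugation $\wtD=S^{-1}\Delta S$ to reduce $\wtD\eta=0$ to $\Delta(e^{w\lambda}\eta)=0$, and then read off $\eta=e^{-w\lambda}\omega$ with $\omega\in\mathcal{H}^p(M;g_E)$. Your explicit converse inclusion is left implicit in the paper's ``iff'', and your requirement that the function class be stable under multiplication by $e^{\pm w\lambda}$ is a reasonable clarification that the paper also leaves implicit.
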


\begin{proof}
By \eqref{eq:conjugation}, $\wtD\eta=0$ if and only if $\Delta(S\eta)=0$.
Thus $\eta\in\ker\wtD$ iff $S\eta=e^{w\lambda}\eta\in\ker\Delta$, i.e.\ iff $\eta=e^{-w\lambda}\omega$ with $\omega\in\mathcal{H}^p(M;g_E)$.
\end{proof}

\begin{theorem}[Homothetic Hodge decomposition]\label{thm:hom-hodge}
Assume $M$ is compact, oriented, and Riemannian (and without boundary).
Then for each $p$ there is an orthogonal decomposition with respect to $\langle\cdot,\cdot\rangle_{\lambda,w}$:
\begin{equation}\label{eq:hom-hodge-decomp}
\boxed{
\Omega^p(M) \;=\; \operatorname{im}\wtd \;\oplus\; \operatorname{im}\wtdel \;\oplus\;
\mathcal{H}^p_{\mathrm{hom}}(M;\lambda,w),}
\end{equation}
and every cohomology class in $H^p_{\mathrm{hom}}(M;\lambda,w)$ has a unique representative in
$\mathcal{H}^p_{\mathrm{hom}}(M;\lambda,w)$.
On noncompact manifolds such as $\R^d$, an analogous global orthogonal decomposition requires additional functional-analytic hypotheses (choice of Hilbert spaces, closures of ranges, and boundary/decay conditions), and is not used in the PDE applications below.
\end{theorem}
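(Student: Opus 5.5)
The plan is to transport the classical Hodge decomposition for $(M,g_E)$ through the multiplication operator $S(\eta)=e^{w\lambda}\eta$. The key observation is that $S$ is an isometry from $(\Omega^p(M),\langle\cdot,\cdot\rangle_{\lambda,w})$ onto $(\Omega^p(M),\langle\cdot,\cdot\rangle_{L^2})$:
\[
\langle S\eta,S\xi\rangle_{L^2}=\int_M e^{2w\lambda}\eta\wedge*\xi=\langle\eta,\xi\rangle_{\lambda,w}.
\]
Since $M$ is compact and $\lambda$ is smooth, $e^{\pm w\lambda}$ are bounded smooth functions. Hence $S$ is also a bijection of $\Omega^p(M)$ onto itself, with inverse $S^{-1}=e^{-w\lambda}$.

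First I invoke the classical Hodge theorem on compact oriented Riemannian $M$ without boundary. It gives the $L^2$-orthogonal decomposition
\[
\Omega^p=\operatorname{im}\dd\oplus\operatorname{im}\delta\oplus\mathcal H^p(M;g_E),
\]
together with the fact that each de Rham class has a unique harmonic representative. Next I apply $S^{-1}$ and use Proposition~\ref{prop:conjugation}:
\[
S^{-1}(\operatorname{im}\dd)=\operatorname{im}(S^{-1}\dd S)=\operatorname{im}\wtd,
\]
because $S$ maps $\Omega^{p-1}$ onto itself. In the same way $S^{-1}(\operatorname{im}\delta)=\operatorname{im}\wtdel$. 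By Proposition~\ref{prop:harmonic-transport}, $S^{-1}\mathcal H^p(M;g_E)=\mathcal H^p_{\mathrm{hom}}$. Because $S^{-1}$ is a linear bijection carrying the $L^2$ pairing to the weighted pairing, the direct sum is preserved and so is orthogonality. This gives \eqref{eq:hom-hodge-decomp}.

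As an independent check, orthogonality can also be verified directly from Proposition~\ref{prop:twisted-adjoint}. We have $\langle\wtd a,\wtdel b\rangle_{\lambda,w}=\langle\wtd^2a,b\rangle_{\lambda,w}=0$. For harmonic $h$, the identity
\[
0=\langle\wtD h,h\rangle_{\lambda,w}=\|\wtd h\|^2_{\lambda,w}+\|\wtdel h\|^2_{\lambda,w}
\]
shows $\wtd h=0$ and $\wtdel h=0$. It follows that $h$ is weighted-orthogonal to both images.

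For the cohomological statement, take a class $[\eta]\in H^p_{\mathrm{hom}}$. By Proposition~\ref{prop:cohom-equivalence}, $S\eta$ is $\dd$-closed, and classical Hodge theory writes $S\eta=\omega+\dd\gamma$ with $\omega\in\mathcal H^p(M;g_E)$ unique. Then $\eta=S^{-1}\omega+\wtd(S^{-1}\gamma)$, so $S^{-1}\omega\in\mathcal H^p_{\mathrm{hom}}$ is a representative of the class. For uniqueness, suppose two homothetic harmonic representatives differ by an element of $\operatorname{im}\wtd\cap\mathcal H^p_{\mathrm{hom}}$. This intersection is $\{0\}$ by the orthogonality just established.

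The main obstacle is the analytic input, and I handle it by citing the classical theorem rather than redoing the elliptic theory. In particular, the fact that $\operatorname{im}\dd$ and $\operatorname{im}\delta$ on smooth forms already exhaust the complement of the harmonic forms is taken as known. The only thing to check carefully is that conjugation by the smooth, nowhere-vanishing factor $S$ preserves smoothness and images exactly, which is immediate on a compact manifold. The final sentence of the theorem about noncompact manifolds is a caveat rather than a claim, so it needs no proof.
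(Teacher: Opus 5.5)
Your proposal is correct and follows essentially the same route as the paper's proof sketch: both transport the classical Hodge decomposition through the conjugation $S=e^{w\lambda}$, using $\wtd=S^{-1}\dd S$, $\wtdel=S^{-1}\delta S$ and $\wtD=S^{-1}\Delta S$. You also supply, correctly, two points the sketch leaves implicit: that $S$ is an isometry from $\langle\cdot,\cdot\rangle_{\lambda,w}$ to the unweighted $L^2$ pairing, so orthogonality is preserved, and the existence and uniqueness argument for homothetic harmonic representatives of cohomology classes.
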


\begin{proof}[Proof sketch]
Since $\wtD=S^{-1}\Delta S$ is conjugate to the usual Hodge Laplacian, \eqref{eq:hom-hodge-decomp} follows by transporting the classical Hodge decomposition for $\Delta$ via the isomorphism $S$ and using that $\wtd=S^{-1}\dd S$ and $\wtdel=S^{-1}\delta S$.
A full proof is standard in weighted Hodge theory and is therefore omitted.
\end{proof}

\begin{remark}[Manifolds with boundary]\label{rem:boundary}
On manifolds with boundary, one may impose absolute/relative (or other) boundary conditions to obtain corresponding Hodge decompositions.
For the PDE applications below, we will mainly work on domains in $\R^d$ and discuss boundary/interface conditions directly at the level of the scalar homothetic Laplace equation.
In particular, on $\R^d$ we only use the scalar conjugation identity $\tilde\Delta u = e^{-w\lambda}\Delta(e^{w\lambda}u)$ and do not appeal to Theorem~\ref{thm:hom-hodge}.
\end{remark}

\section{A Homothetic Laplace equation for the Scalar Fields} \label{sec:scalar-pde}

From here on, we restrict ourselves on the Riemannian metrics, resulting in, the elliptic $\Delta$ and $\tilde{\Delta}$.
 
\subsection{0-forms: explicit form of the homothetic Laplacian}
In this section we specialize to $p=0$, i.e.\ scalar fields (functions).
Let $(M,g)$ be a Riemannian manifold with fixed metric $g$, and let $\Delta$ denote the (positive) Laplace--Beltrami operator on functions.\footnote{%
Throughout, one may equivalently use the geometric sign convention $\Delta=\delta\dd=-\mathrm{div}\,\grad$.} For $u\in C^\infty(M)$, the homothetic Laplacian is, by \eqref{eq:conjugation},
\begin{equation}\label{eq:scalar-tildeDelta-conj}
\wtD u \;=\; e^{-w\lambda}\,\Delta\!\big(e^{w\lambda}u\big).
\end{equation}
Expanding \eqref{eq:scalar-tildeDelta-conj} using the product rule yields the drift--potential form
\begin{equation}\label{eq:scalar-tildeDelta-expanded}
\wtD u \;=\; \Delta u + 2w\,\langle \grad\lambda,\grad u\rangle + \left(w(\Delta\lambda) + w^2\,|\grad\lambda|^2 \right)u,
\end{equation}
where $\langle\cdot,\cdot\rangle$ and $|\cdot|$ are taken with respect to $g$.

\begin{remark}\label{rem:div-form}
We should remark that 
\begin{equation}\label{eq:scalar-div-form}
\wtD u = 0
\qquad\Longleftrightarrow\qquad
\Delta\!\big(e^{w\lambda}u\big)=0.
\end{equation}
Thus the homothetic harmonicity of $u$ is equivalent to the classical harmonicity of the ``dressed'' field $e^{w\lambda}u$.
\end{remark}

\subsection{Homothetic Laplace equation for the homothetically transformed fields with the homothety center \texorpdfstring{$\alpha_d$}{alpha\_d}}
Let $\alpha_E$ be a scalar field in the Einstein gauge and let $\alpha_d$ be a fixed, scale-invariant $0$-form.
The homothetic transformation \eqref{eq:affine-homothety} reads
\begin{equation}\label{eq:scalar-affine}
\alpha \;=\; e^{-w\lambda}\alpha_E + (1-e^{-w\lambda})\alpha_d,
\end{equation}
and the shifted variable
\begin{equation}\label{eq:u-def}
u \;:=\; \alpha-\alpha_d
\end{equation}
transforms linearly: $u=e^{-w\lambda}(\alpha_E-\alpha_d)$.

\begin{definition}[Homothetic scalar Laplace equation]\label{def:hom-laplace-scalar}
We say that $\alpha$ satisfies the \emph{homothetic Laplace equation} (with center $\alpha_d$) if
\begin{equation}\label{eq:hom-laplace-u}
\wtD(\alpha-\alpha_d)=0, \qquad\text{i.e.} \qquad \wtD \alpha \;=\; \wtD\alpha_d.
\end{equation}
\end{definition}
\noindent
Expanding the above equation, and using \eqref{eq:scalar-tildeDelta-expanded} gives
\begin{equation}\label{eq:forced-expanded}
\Delta (\alpha-\alpha_d) + 2w\,\langle \grad\lambda,\grad (\alpha-\alpha_d)\rangle
+ \left( w(\Delta\lambda) + w^2\,|\grad\lambda|^2 \right)(\alpha-\alpha_d) = 0.
\end{equation}

And finally, if we consider the harmonic $\alpha_d$, in the classical sense, $\Delta\alpha_d=0$, then \eqref{eq:forced-expanded} reduces to
\begin{equation}\label{eq:forced-alpha-d-harmonic}
\boxed{
\Delta \alpha + 2w\,\langle \grad\lambda,\grad (\alpha-\alpha_d)\rangle
+ \left( w(\Delta\lambda) + w^2\,|\grad\lambda|^2 \right)(\alpha-\alpha_d) = 0.}
\end{equation}
We call this equation and its solutions the \emph{homothetic Laplace equation}, and the \emph{homothetic harmonic functions} respectively. This equation will be employed for the imposition of the boundary conditions on the classical Laplace equations in the next sections. 

\subsection{Brief remarks on well-posedness from the elliptic theory viewpoint}
On a bounded domain $\Omega\subset M$ with smooth boundary and Riemannian metric $g$, the operator
\begin{equation}\label{eq:Lu}
L_{\lambda,w}u \;:=\; \wtD u
=
\Delta u
+ 2w\,\langle \grad\lambda,\grad u\rangle
+ w(\Delta\lambda)\,u
+ w^2\,|\grad\lambda|^2\,u
\end{equation}
is a second-order elliptic operator with the same principal symbol as $\Delta$.
In particular, standard existence and uniqueness results apply under standard boundary conditions (Dirichlet, Neumann, or mixed) provided the coefficients determined by $\lambda$ are sufficiently regular.
In later sections we focus on a complementary viewpoint: choosing $\lambda$ to localize near an interface so that the single equation $L_{\lambda,w}u=0$ effectively enforces boundary or matching data for $u=\alpha-\alpha_d$.

\section{PDE applications}\label{PDE-Appl}

The scalar homothetic Laplace equation \eqref{eq:forced-alpha-d-harmonic} can be used to \emph{impose} boundary/interface conditions for classical elliptic problems \cite{Evans2010,GilbargTrudinger2001}. The idea is to choose $\lambda$ so that the lower-order coefficients in the homothetic operator concentrate near a hypersurface $S$, to penalize the mismatch of prescribed data encapsulated in $\phi_d$. Thus $\lambda$ is treated as a given fixed parameter, not a dynamical one. This viewpoint is particularly appropriate for PDE applications, and geometrically justifies the volume-penalization/diffuse-interface approaches are widely in use in the computational physics \cite{AngotBruneauFabrie1999,Mittal2008,Nguyen2014}. We use the normalizing choice $w=1$, which yields particularly transparent formulas. We restrict to the Euclidean domains $\R^d$, on which the inner products simplify to the dot products. Moreover, to emphasize that we are treating 0-forms, we will use $\phi$ instead of $\alpha$. Therefore, Eq. \eqref{eq:forced-alpha-d-harmonic} reads 

\begin{equation}\label{eq:forced-alpha-d-harmonic-Euc}
\Delta \phi + 2\grad\lambda\cdot\grad (\phi-\phi_d)+ \left( (\Delta\lambda) + |\grad\lambda|^2 \right)(\phi-\phi_d) = 0.
\end{equation}

We always assume $\phi_d$ is harmonic in the classical sense $\Delta \phi_d=0$. For instance, $\phi_d$ may be constructed in a tubular neighborhood of $S$ as the harmonic extension of the prescribed boundary data. Then, by its smooth extension outside this neighborhood, the penalization terms below are localized near $S$ (cf.\ standard elliptic regularity and extension results \cite{GilbargTrudinger2001,Evans2010}).

\subsection{Cauchy, Dirichlet and Neumann problems}\label{subsec:7.1}

Let $S\subset \R^d$ be a closed, connected, orientable, embedded $C^\infty$ hypersurface where $d\in\{1,2,3\}$. By the Jordan--Brouwer separation theorem, $\R^d\setminus S$ has precisely two connected components, one bounded and one unbounded \cite{Munkres2000}. We denote them by $\Omega_i$ (interior) and $\Omega_o$ (exterior), and we have ${\Omega}_i\cap {\Omega}_o=S$. Let $\nu$ be the unit normal on $S$ pointing from $\Omega_i$ into $\Omega_o$, and for a sufficiently regular function $u$ define $\partial_\nu u := \nabla u\cdot \nu$ on $S$. Moreover, let $T_{\varepsilon_n}(S)$ denote the tubular neighborhood of $S$ of radius $\varepsilon_n$, where $\lim_{n\to\infty} \varepsilon_n=0$. Let $\xi(x):=\mathrm{dist}(x,S)$ denote the (unsigned) distance function, which is smooth in each of the two components $\Omega_i$ and $\Omega_o$ away from $S$. Using these definitions, we set
\begin{equation}\label{eq:lambda-n}
\lambda_n := \ln f_n(\xi(x)),
\end{equation}
in which $f_n$ belongs to a family $(f_n)_{n\ge 1}$ of smooth cutoff functions (see \ref{lambda_bcs} for the details). Note that such a behavior results in
\begin{equation}\label{eq:pen_11}
\nabla\lambda_n=\frac{\nabla f_n}{f_n}\qquad\text{and}\qquad \Delta\lambda_n+|\nabla\lambda_n|^2=\frac{\Delta f_n}{f_n}.
\end{equation}
In particular, since $f_n\equiv 1$ on $\R^d\setminus T_{\varepsilon_n}(S)$, one has $\nabla\lambda_n\equiv 0$ and $\Delta\lambda_n+|\nabla\lambda_n|^2\equiv 0$ there, so all lower--order homothetic terms vanish identically away from $T_{\epsilon_n}(S)$.

\medskip
\noindent\textbf{Cauchy data on $S$.} Fix boundary data
\[
g := \phi_S \quad\text{and}\quad h := \partial_\nu \phi_S
\]
on $S$, the problem 
\begin{equation}\label{eq:Cauchy-strong}
\phi(x):=\begin{cases}
\Delta \phi = 0 \ \ \text{in }\R^d \\
\phi|_{S}=g, \\
\partial_\nu\phi|_{S}=h,
\end{cases}
\end{equation}
under a suitable condition at infinity in the exterior domain (e.g.\ decay, boundedness, or a prescribed far--field behavior), defines a \emph{Cauchy problem} for the Laplace equation on $\R^d$. 

It is well known that \eqref{eq:Cauchy-strong} is overdetermined. Generically, a pair $(g,h)$ does \emph{not} arise as the trace and normal trace of a global harmonic function on $\R^d$, and the corresponding continuation problem is ill-posed in the Hadamard sense (see, e.g., \cite{Hadamard1923,LionsMagenes1972}).

\medskip
\noindent \textbf{A penalized homothetic formulation.}
Given a fixed function $\phi_d$ which is harmonic in $T_{\epsilon_n}(S)$ and matches the desired boundary data on $S$
\[
\phi_d|_S=g,\qquad \partial_\nu\phi_d|_S=h,
\]
By choosing $\lambda_n$ from \eqref{eq:lambda-n} resulting in \eqref{eq:pen_11} coefficients, Eq. \eqref{eq:forced-alpha-d-harmonic-Euc} becomes the \emph{penalized Cauchy problem}
\begin{equation}\label{eq:penalized-Cauchy}
\boxed{\Delta \phi_n
+2\,\frac{\nabla f_n}{f_n}\cdot\bigl(\nabla\phi_n-\nabla\phi_d\bigr)
+\frac{\Delta f_n}{f_n}\,(\phi_n-\phi_d)=0}
\qquad \text{in }\R^d\setminus S.
\end{equation}
For each fixed $n$, \eqref{eq:penalized-Cauchy} is a linear second--order elliptic equation with smooth coefficients on $\Omega_i$ and on $\Omega_o$. As $n\to\infty$, the lower--order terms concentrate near $S$ and act as a thin penalty layer which attempts to enforce \emph{simultaneously} the Dirichlet and Neumann data encoded in $\phi_d$.

\medskip
\noindent\textbf{Strong vs.\ weak solutions.} Since, in the limit, the solutions may fail to be $C^1$ across $S$, it is convenient to work with a piecewise Sobolev space  \cite{AdamsFournier2003,LionsMagenes1972}. Define the space
\[
H^1(\Omega_i\sqcup \Omega_o)
:=\bigl\{u:\ u|_{\Omega_i}\in H^1(\Omega_i),\ u|_{\Omega_o}\in H^1(\Omega_o)\bigr\},
\qquad
\|u\|^2 := \|u\|^2_{H^1(\Omega_i)}+\|u\|^2_{H^1(\Omega_o)}.
\]
For $u\in H^1(\Omega_i\sqcup\Omega_o)$, denote by $\gamma_i u$ and $\gamma_o u$ the (Sobolev) traces on $S$ from the interior and exterior side, respectively. When the traces exist, we define the \emph{jump} by
\[
[u]_S := \gamma_o u-\gamma_i u .
\]
Similarly, for sufficiently regular $u$ (e.g.\ piecewise $H^2$), one may define the normal derivative traces $\partial_\nu u|_{\Omega_i}$ and $\partial_\nu u|_{\Omega_o}$ and their jump $[\partial_\nu u]_S := (\partial_\nu u|_{\Omega_o})-(\partial_\nu u|_{\Omega_i})$.

\noindent
With these definitions in place, we analyze Eq. \eqref{eq:penalized-Cauchy}.

\medskip
\noindent\textbf{Consistent and inconsistent Cauchy data.} For Eq. \eqref{eq:penalized-Cauchy} two different situations are distinguishable.

\begin{itemize}
\item[\it{(A)}] \emph{Consistent Cauchy data (a global harmonic function exists).} \\
Assume that there exists a harmonic function $\phi\in C^\infty(\R^d)$ satisfying the Cauchy data,
that is, \eqref{eq:Cauchy-strong} holds (together with the chosen behavior at infinity in the
exterior domain). In this situation, the penalized Cauchy formulation \eqref{eq:penalized-Cauchy} is consistent with a global classical harmonic function. It is then natural (and can be proved under standard uniform elliptic estimates) that the solutions $\phi_n$ converge, as $n\to\infty$, to that global harmonic function in the \emph{strong} sense (e.g., in $H^1_{\mathrm{loc}}(\R^d)$, and in $C^\infty_{\mathrm{loc}}(\R^d\setminus S)$ by elliptic regularity). In particular, the restrictions
\[
\phi_i := \phi|_{\Omega_i},\qquad \phi_o := \phi|_{\Omega_o}
\]
are harmonic and satisfy the full matching conditions
\[
[\phi]_S=0,\qquad [\partial_\nu\phi]_S=0,
\]
so that $\phi_i$ and $\phi_o$ are harmonic continuations of one another across $S$.

\begin{remark}[Internal surface] In this case, the hypersurface $S$ does not act as a boundary for the limiting harmonic field: the interior and exterior solutions glue to a single $\phi\in C^\infty(\R^d)$ with $[\phi]_S=[\partial_\nu\phi]_S=0$. In this case, $S$ is an \emph{internal surface} (or \emph{removable interface}).
\end{remark}

\item[\it(B)] \emph{Inconsistent Cauchy data (no global harmonic function exists).}\\
Assume now that the Cauchy data are inconsistent, that is, there is no global harmonic function on
$\R^d$ satisfying both $\phi|_S=g$ and $\partial_\nu\phi|_S=h$. In other words, one cannot enforce \emph{simultaneously} the two constraints encoded in $\phi_d$ and $\nabla\phi_d$ on $S$.

\begin{definition}[Homothetic boundary] In this case, we call $S$ a \emph{homothetic boundary}. Equivalently, there is no global harmonic $\phi$ on $\R^d$ satisfying both $\phi|_S=g$ and $\partial_\nu\phi|_S=h$, so the limiting homothetic formulation necessarily yields a problem in which at least one of the jumps $[\phi]_S$ or $[\partial_\nu\phi]_S$ is nonzero.
\end{definition}

When there is not a consistent Cauchy data in hand, it is usually natural to think of imposing only one of the two boundary conditions. It leads to the following two reduced penalized problems.
\end{itemize}

\begin{itemize}
\item \textbf{Penalized Dirichlet problem.}\\
Solve
\begin{equation}
\label{eq:penalized-Dirichlet}
\Delta \phi_n + \frac{\Delta f_n}{f_n}\,(\phi_n-\phi_d)=0\qquad \text{in }\R^d\setminus S.
\end{equation}
Formally, the penalty term forces $\gamma_i\phi_n\approx \gamma_o\phi_n\approx g$ on $S$ as
$n\to\infty$, but it does \emph{not} enforce matching of normal derivatives. Hence, in the limit one generically obtains a piecewise harmonic function $\phi\in H^1(\Omega_i\sqcup\Omega_o)$ satisfying
\[
\gamma_i\phi=\gamma_o\phi=g \quad\text{on }S,
\]
while the normal derivative may jump:
\[
[\partial_\nu\phi]_S\neq 0 \quad \text{in general}.
\]
Consequently there is, in general, \emph{no} global strong (classical) harmonic solution on $\R^d$; rather, the limit object is a weak solution in the space $H^1(\Omega_i\sqcup\Omega_o)$.

\item \textbf{Penalized Neumann problem.} \\
Here we assume $h$ satisfies the conventional consistency condition for a homogeneous Neumann problem $\int_S hdS=0$. Solve 
\begin{equation}
\label{eq:penalized-Neumann}
\Delta \phi_n + 2\,\frac{\nabla f_n}{f_n}\cdot\bigl(\nabla\phi_n-\nabla\phi_d\bigr)=0
\qquad \text{in }\R^d\setminus S.
\end{equation}
Here the penalization enforces $\partial_\nu\phi|_S=h$ in the limit (in the natural weak trace sense), but it does \emph{not} enforce matching of values across $S$. Thus the limiting object is again piecewise harmonic, satisfies the Neumann condition on $S$ (on each side, with the appropriate sign convention), and may exhibit a jump in the trace:
\[
[\phi]_S\neq 0 \quad \text{in general}.
\]
As above, this precludes a global strong solution on $\R^d$ in general, but admits a weak solution; cf.\ the standard variational theory for Neumann problems \cite{LionsMagenes1972,Evans2010}.
\end{itemize}

\noindent
The above issues can be summarized as follows.

\begin{theorem}\label{thm:penalization-limit}
Given the above definitions, assume $\phi_n$ is a family of solutions to either \eqref{eq:penalized-Dirichlet} or \eqref{eq:penalized-Neumann}, subject to a fixed behavior at infinity in the exterior domain. Assume further that $\phi_n$ is
uniformly bounded in $H^1_{\mathrm{loc}}(\Omega_i)$ and $H^1_{\mathrm{loc}}(\Omega_o)$. Then, after extraction of a subsequence, $\phi_n$ converges weakly in $H^1_{\mathrm{loc}}(\Omega_i)\times H^1_{\mathrm{loc}}(\Omega_o)$ to a pair $(\phi_i,\phi_o)$ such that:
\begin{enumerate}
\item $\Delta \phi_i = 0$ in $\Omega_i$ and $\Delta\phi_o=0$ in $\Omega_o$ in the weak sense (and hence $\phi_i,\phi_o$ are smooth by elliptic regularity).
\item In the Dirichlet case \eqref{eq:penalized-Dirichlet}, the traces satisfy $\gamma_i\phi_i=\gamma_o\phi_o=g$ on $S$, i.e.\ $(\phi_i,\phi_o)$ solve the classical Laplace--Dirichlet problems on $\Omega_i$ and $\Omega_o$ with boundary value $g$.
\item In the Neumann case \eqref{eq:penalized-Neumann}, the (weak) normal traces satisfy the prescribed Neumann condition determined by $h$ on $S$ (with the standard sign conventions on $\Omega_i$ and $\Omega_o$), i.e.\ $(\phi_i,\phi_o)$ solve the classical Laplace--Neumann problems on $\Omega_i$ and $\Omega_o$ with boundary flux $h_d$ (with the usual additive-constant ambiguity, fixed by the boundary condition at infinity if needed).
\end{enumerate}
In general, the global glued function
\[
\phi(x):=\begin{cases}
\phi_i(x), & x\in \Omega_i,\\
\phi_o(x), & x\in \Omega_o,
\end{cases}
\]
is a weak solution on $\R^d$ and may exhibit a jump in $\phi$ or in $\partial_\nu\phi$
across $S$, depending on whether Dirichlet or Neumann data were imposed.
\end{theorem}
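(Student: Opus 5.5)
We use a compactness argument on compact subsets away from $S$, where the penalty terms vanish, and then recover the boundary conditions from the behaviour of $\phi_n$ inside the layer $T_{\varepsilon_n}(S)$.

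\textbf{Step 1: weak limits and harmonicity.} By the assumed uniform bound in $H^1_{\mathrm{loc}}(\Omega_i)$ and $H^1_{\mathrm{loc}}(\Omega_o)$, exhaust each component by compact sets. A diagonal extraction then gives a subsequence with $\phi_n\rightharpoonup(\phi_i,\phi_o)$ weakly in $H^1_{\mathrm{loc}}$, and strongly in $L^2_{\mathrm{loc}}$ by Rellich.

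Fix a test function $\psi\in C_c^\infty(\Omega_i)$; the exterior case is identical. Its support has positive distance from $S$, so $\operatorname{supp}\psi\cap T_{\varepsilon_n}(S)=\emptyset$ for all large $n$. On that support $f_n\equiv1$, so by \eqref{eq:pen_11} both penalty coefficients vanish. Hence \eqref{eq:penalized-Dirichlet} and \eqref{eq:penalized-Neumann} reduce there to $\int\nabla\phi_n\cdot\nabla\psi=0$. Passing to the weak limit gives $\Delta\phi_i=0$ weakly, and Weyl's lemma gives smoothness. This proves item 1.

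\textbf{Step 2: Dirichlet traces.} We treat the interior side; the exterior is symmetric. Write the layer in normal coordinates $x=y+s\nu(y)$, with $y\in S$ and $|s|<\varepsilon_n$, so that $f_n=f_n(|s|)$.

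Multiply \eqref{eq:penalized-Dirichlet} by $(\phi_n-\phi_d)\chi$, where $\chi$ is a cutoff equal to $1$ near $S$, and integrate by parts. Using the properties of $f_n$ from \ref{lambda_bcs}, namely that $-\Delta f_n/f_n$ is a nonnegative potential of size $\sim\varepsilon_n^{-2}$, this should yield the estimate
\[
\int_{T_{\varepsilon_n}}\Bigl|\frac{\Delta f_n}{f_n}\Bigr|\,|\phi_n-\phi_d|^2\le C .
\]
A one-dimensional trace inequality across the layer then gives $\|\gamma_i\phi_n-g\|_{L^2(S)}\to0$, up to the trace taken at distance $\varepsilon_n$ from $S$. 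This shift is controlled by $\varepsilon_n^{1/2}\|\nabla\phi_n\|_{L^2}$.

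Weak continuity of the trace $H^1(\Omega_i\cap U)\to L^2(S)$ then gives $\gamma_i\phi_i=g$.

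\textbf{Step 3: Neumann case.} Rewrite \eqref{eq:penalized-Neumann} in conservative form using $f_n^2$:
\[
\operatorname{div}\bigl(f_n^2\nabla\phi_n\bigr)=2f_n\nabla f_n\cdot\nabla\phi_d+(\text{terms }O(1)\text{ on the layer}),
\]
valid for $\phi_n$ as well as its harmonic reference $\phi_d$. Test against $\psi\in C_c^\infty(\overline{\Omega_i})$, with compact support in a neighbourhood of $\overline{\Omega_i}$. Then $\int_{\Omega_i}\nabla\phi_i\cdot\nabla\psi$ should converge to $\int_S h\psi\,dS$, which is the weak form of $\partial_\nu\phi_i=h$. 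The exterior side is analogous with the opposite orientation.

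The constant ambiguity is fixed by the condition at infinity. Compatibility uses $\int_S h\,dS=0$ on the bounded side.

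\textbf{Main obstacle.} The delicate step is Steps 2–3: obtaining layer estimates that are uniform in $n$ and identifying the limit of the singular layer integrals. This depends on the precise profile of $f_n$ from \ref{lambda_bcs}; in particular $f_n$ degenerates (or $\ln f_n$ blows up) at $S$, so the coefficients are not bounded.

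I would first handle the model case of a flat $S$ with one-dimensional ODE profiles, where the layer problem can be solved essentially explicitly. Curvature and tangential derivatives then enter as $O(\varepsilon_n)$ perturbations in tubular coordinates.

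Finally, the glued $\phi$ is piecewise harmonic in $H^1(\Omega_i\sqcup\Omega_o)$. The jump statement then follows from the discussion in case (B).
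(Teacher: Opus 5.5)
The paper states Theorem~\ref{thm:penalization-limit} without proof; it only summarizes the formal remark that the penalty ``forces'' the data. So your argument has to stand on its own. Step~1 is fine, but Steps~2--3 have two genuine gaps.

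\textbf{First gap: no control near $S$.} A uniform bound in $H^1_{\mathrm{loc}}(\Omega_i)$ and $H^1_{\mathrm{loc}}(\Omega_o)$, meaning on compact subsets of the open sides, gives no control in $T_{\varepsilon_n}(S)$. Yet you use $\varepsilon_n^{1/2}\|\nabla\phi_n\|_{L^2(T_{\varepsilon_n}(S))}$ and weak continuity of the trace on $\Omega_i\cap U$.

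With the hypothesis read literally, items~2--3 are in fact false, because the coefficients vanish on $\{\xi\ge\varepsilon_n\}$:
\begin{itemize}
\item Dirichlet: for $S=\partial B(R)$, take $u_n:=\phi_n-\phi_d\equiv c$ on $B(R-\varepsilon_n)$ and continue it through the layer by the radial ODE. This solves \eqref{eq:penalized-Dirichlet} in $\Omega_i$ and converges to $\phi_d+c$ in $H^1_{\mathrm{loc}}(\Omega_i)$, but $\gamma_i\phi_i=g+c$.
\item Neumann: $u_n=c/r$ outside $B(R+\varepsilon_n)$, continued by $r^2f_n^2u_n'=-c$, solves \eqref{eq:penalized-Neumann}, yet $\partial_\nu\phi_o=h-c/R^2$. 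Here $u_n\sim\xi^{-1}$ at $S$, which only an up-to-$S$ bound excludes.
\end{itemize}
So you must assume uniform $H^1$ bounds on a neighbourhood of $S$ in each closed side.

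Under that assumption your Neumann step works, and more cleanly than you wrote it. Since $\Delta\phi_d=0$, \eqref{eq:penalized-Neumann} is exactly $\operatorname{div}\bigl(f_n^2\nabla(\phi_n-\phi_d)\bigr)=0$, with no $O(1)$ remainder. The flux through $S$ drops out because $f_n^2=\xi^2$ vanishes there. Hence $\int_{\Omega_i}f_n^2\nabla(\phi_n-\phi_d)\cdot\nabla\psi=0$ for $\psi\in C_c^\infty(\R^d)$, and $f_n^2\nabla\psi\to\nabla\psi$ in $L^2$ gives $\partial_\nu\phi_i=h$ weakly.

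\textbf{Second gap: the Dirichlet step relies on a sign that is false.} $-\Delta f_n/f_n$ is not a nonnegative potential.
\begin{itemize}
\item In the collar $\{\xi\le\eta\varepsilon_n\}$ one has $f_n=\xi$, so $\Delta f_n/f_n=\Delta\xi/\xi$. This vanishes in the flat (half-space) model and is otherwise an $O(\xi^{-1})$ curvature term whose sign depends on the side. Its indicial exponents are $0$ and $1$, so it selects no boundary value at all.
\item In the transition zone, $f_n$ climbs from the value $\eta\varepsilon_n$ with slope $1$ to the value $1$ with slope $0$. So $f_n'$ must rise to at least $(1-\eta\varepsilon_n)/((1-\eta)\varepsilon_n)$ before returning to $0$. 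In the flat model,
\[
\int_{\eta\varepsilon_n}^{\varepsilon_n}\frac{f_n''}{f_n}\,d\xi
=-\frac{1}{\eta\varepsilon_n}+\int_{\eta\varepsilon_n}^{\varepsilon_n}\Bigl(\frac{f_n'}{f_n}\Bigr)^{2}d\xi
\ \ge\ -\frac{1}{\eta\varepsilon_n}+\frac{\ln^{2}(1/(\eta\varepsilon_n))}{(1-\eta)\varepsilon_n}>0
\]
for large $n$. The potential is therefore predominantly of the \emph{non}-coercive sign.
\end{itemize}
Testing with $(\phi_n-\phi_d)\chi$ thus yields no sign-based bound on $\int|\Delta f_n/f_n|\,|\phi_n-\phi_d|^2$. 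Even if it did, such a bound would only see the transition zone, not $S$.

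What can work, at least in the flat model and for reasonable profiles, is to use the \emph{size} of this integral rather than a sign. The up-to-$S$ energy bound keeps $\phi_n-\phi_d$ within $O(\varepsilon_n^{1/2})$, in $L^2(S)$, of its trace across the layer. Integrating the equation across the layer against $\psi(y)$ then pits the trace, multiplied by a coefficient of order $\varepsilon_n^{-1}\ln^2(1/\varepsilon_n)$, against flux terms of order $\varepsilon_n^{-1/2}$. This forces the trace of $\phi_n-\phi_d$ to $0$. That is the kind of argument item~2 needs, not a coercivity estimate.
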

The above issues can be more developed if we put them inside the framework of classical potential theory.

\subsection{Relation to classical potential theory} 

The above discussions are closely related to classical potential theory, particularly the behavior of single-layer and double-layer potentials \cite{Kress2014, Ladyzhenskaya1985}; see also \cite{McLean2000}. Specifically, a \emph{single-layer potential} yields a continuous potential across the interface but typically introduces a discontinuity in its normal derivative. In contrast, a \emph{double-layer potential} maintains continuity of the normal derivative but allows for a jump in the potential itself \cite{ColtonKress1998}. The behavior of equation \eqref{eq:penalized-Cauchy} in the limit $n \to \infty$ is thus analogous to a \emph{combined layer potential}, where both Dirichlet and Neumann data are enforced. Therefore, the above discussions can be rephrased in the following theorems.

\begin{proposition}[Distributional Laplacian across a hypersurface]
\label{prop:distributional-Laplace-jumps}
Let $S\subset\R^d$ be a $C^\infty$ embedded hypersurface and write $\R^d=\Omega_i\sqcup S\sqcup\Omega_o$. Fix a unit normal field $\nu$ on $S$ pointing from $\Omega_i$ to $\Omega_o$ and write $\partial_\nu:=\nabla(\,\cdot\,)\cdot \nu$. Let $\phi_i\in H^2_{\mathrm{loc}}(\Omega_i)$ and $\phi_o\in H^2_{\mathrm{loc}}(\Omega_o)$ satisfy
\[
\Delta \phi_i=0\ \text{in }\Omega_i,\qquad \Delta\phi_o=0\ \text{in }\Omega_o,
\]
and define the glued function $\phi$ on $\R^d$ by
\[
\phi(x):=\begin{cases}
\phi_i(x),&x\in\Omega_i,\\
\phi_o(x),&x\in\Omega_o.
\end{cases}
\]
Assume the traces $\gamma_i\phi,\gamma_o\phi$ and normal traces exist so that the jumps
\[
[\phi]_S:=\gamma_o\phi-\gamma_i\phi,\qquad
[\partial_\nu\phi]_S:=\bigl(\partial_\nu\phi_o|_S\bigr)-\bigl(\partial_\nu\phi_i|_S\bigr)
\]
are well-defined. Then, for every $\psi\in C_c^\infty(\R^d)$ one has
\begin{equation}
\label{eq:distributional-Laplace-jumps}
\langle \Delta \phi,\psi\rangle= \int_S [\partial_\nu\phi]_S\,\psi\,d\sigma \;-\; \int_S [\phi]_S\,\partial_\nu\psi\,d\sigma .
\end{equation}
Equivalently, in $\mathcal D'(\R^d)$,
\[
\Delta\phi = [\partial_\nu\phi]_S\,\delta_S \;+\; [\phi]_S\,\partial_\nu\delta_S,
\qquad\text{where}\qquad \langle \partial_\nu\delta_S,\psi\rangle:=-\int_S \partial_\nu\psi\,d\sigma .
\]
In particular, $\phi$ is a global harmonic function on $\R^d$ if and only if $[\phi]_S=0$ and $[\partial_\nu\phi]_S=0$ (i.e.\ $S$ is a removable interface for $\phi$).
\end{proposition}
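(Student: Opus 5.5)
The plan is to apply Green's second identity separately on the two sides of $S$ and add the results. Since $\phi\in L^1_{\mathrm{loc}}(\R^d)$, its distributional Laplacian is defined by $\langle\Delta\phi,\psi\rangle:=\int_{\R^d}\phi\,\Delta\psi\,dx$. Here $\Delta$ is the analyst's Laplacian $\nabla\cdot\nabla$; the sign in \eqref{eq:distributional-Laplace-jumps} fixes this convention. Because $\psi$ has compact support, we split this integral as $\int_{\Omega_i}\phi_i\Delta\psi+\int_{\Omega_o}\phi_o\Delta\psi$. Only a bounded portion of $\Omega_o$ contributes, so no condition at infinity is needed.

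On each side we use Green's second identity
\[
\int_{\Omega}\bigl(u\Delta\psi-\psi\Delta u\bigr)\,dx=\int_{\partial\Omega}\bigl(u\,\partial_n\psi-\psi\,\partial_n u\bigr)\,d\sigma,
\]
where $n$ is the outward normal of $\Omega$. This identity holds for $u\in H^2$ near the boundary, since it extends by density from smooth functions using the trace theorem. The assumption that the normal traces exist is what makes it legitimate here: we take it to mean that $\phi_i$ and $\phi_o$ are $H^2$ (or at least $H^1$ with $\Delta\in L^2$) up to $S$ on $\operatorname{supp}\psi$. Since $\Delta\phi_i=\Delta\phi_o=0$, the volume terms involving $\Delta u$ vanish. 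For $\Omega_i$ the outward normal is $\nu$, which gives $\int_S(\gamma_i\phi\,\partial_\nu\psi-\psi\,\partial_\nu\phi_i)\,d\sigma$. For $\Omega_o$ the outward normal on $S$ is $-\nu$, which gives $-\int_S(\gamma_o\phi\,\partial_\nu\psi-\psi\,\partial_\nu\phi_o)\,d\sigma$. In both cases there are no other boundary contributions, because $\psi$ is compactly supported. To be careful with the unbounded $\Omega_o$, we apply the identity on $\Omega_o\cap B_R$ with $B_R\supset\operatorname{supp}\psi$, where the contribution from $\partial B_R$ vanishes.

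Summing the two sides gives
\[
\int_S[\partial_\nu\phi]_S\,\psi\,d\sigma-\int_S[\phi]_S\,\partial_\nu\psi\,d\sigma,
\]
which is \eqref{eq:distributional-Laplace-jumps}. The identity in $\mathcal D'(\R^d)$ is then a restatement, using $\langle\delta_S,\psi\rangle=\int_S\psi\,d\sigma$ and the stated definition of $\partial_\nu\delta_S$.

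For the final equivalence, if both jumps vanish then $\Delta\phi=0$ in $\mathcal D'$, and Weyl's lemma makes $\phi$ smooth and harmonic on $\R^d$. Conversely, suppose $\phi$ is harmonic, so that $\Delta\phi=0$ distributionally. The two surface functionals can be separated by choosing test functions. First take $\psi$ with $\psi|_S=0$ and $\partial_\nu\psi|_S=\chi$ arbitrary. For example, set $\psi=\xi_s\,\chi(\pi(x))\,\eta$ near $S$, where $\xi_s$ is the signed distance to $S$, $\pi$ is the nearest-point projection onto $S$, and $\eta$ is a cutoff. This gives $[\phi]_S=0$. Next take $\psi$ with $\partial_\nu\psi|_S=0$ and $\psi|_S$ arbitrary. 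This gives $[\partial_\nu\phi]_S=0$.

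The main obstacle is regularity. The statement only assumes $H^2_{\mathrm{loc}}$ on the open sets, so justifying the boundary traces and Green's identity up to $S$ requires the interpretive hypothesis above. Alternatively, one can use interior elliptic regularity together with the harmonicity of $\phi_i$ and $\phi_o$, working on the parallel surfaces $\{\xi=\epsilon\}$ and letting $\epsilon\to0$ via convergence of traces.
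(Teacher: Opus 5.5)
The paper states this proposition without proof, so there is nothing to compare against. Your argument is the standard jump-relation derivation that the paper implicitly relies on when it cites classical potential theory, and it is correct: Green's second identity on each side, with outward normal $\nu$ on $\Omega_i$ and $-\nu$ on $\Omega_o$, summed, gives exactly \eqref{eq:distributional-Laplace-jumps}.

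Your added care is needed and is not a departure from the intended argument. First, $H^2_{\mathrm{loc}}$ on the open sets alone does not give $\phi\in L^1_{\mathrm{loc}}(\R^d)$ or Green's formula up to $S$. So reading ``traces and normal traces exist'' as regularity up to $S$ on $\operatorname{supp}\psi$ is the right way to make the statement meaningful. Second, your sign remark matters. Earlier sections of the paper take $\Delta$ to be the positive (Hodge) Laplacian, which on functions is $-\nabla\cdot\nabla$. Under that convention the right-hand side would flip sign, so the identity as stated holds only for $\Delta=\nabla\cdot\nabla$.

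For the converse, a global harmonic function is smooth across $S$, so its one-sided traces and normal derivatives coincide trivially. Your test-function separation is valid but not necessary.
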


\begin{corollary}[Potential theory setting] \label{cor:penalization-produces-layer}
In the setting of Theorem~\ref{thm:penalization-limit}, assume additionally that the limiting pair
$(\phi_i,\phi_o)$ admits traces and normal traces on $S$ so that the jumps $[\phi]_S$ and $[\partial_\nu\phi]_S$ are well-defined, and let $\phi$ be the glued function. Then $\Delta\phi$ is a distribution supported on $S$ and is given by
\[
\Delta\phi =[\partial_\nu\phi]_S\,\delta_S \;+\; [\phi]_S\,\partial_\nu\delta_S \qquad\text{in }\mathcal D'(\R^d).
\]
Moreover:
\begin{enumerate}
\item In the Dirichlet penalization \eqref{eq:penalized-Dirichlet}, one has $\gamma_i\phi_i=\gamma_o\phi_o=g$ on $S$, hence $[\phi]_S=0$ and
\[
\Delta\phi = [\partial_\nu\phi]_S\,\delta_S ,
\]
i.e.\ $S$ carries a \emph{single-layer} source unless $[\partial_\nu\phi]_S=0$.

\item In the Neumann penalization \eqref{eq:penalized-Neumann}, the enforced flux condition yields
the corresponding normal-trace constraint on $S$ (with the sign convention of
Theorem~\ref{thm:penalization-limit}); in the flux-matching case $[\partial_\nu\phi]_S=0$ and
\[
\Delta\phi = [\phi]_S\,\partial_\nu\delta_S ,
\]
i.e.\ $S$ carries a \emph{double-layer} source unless $[\phi]_S=0$.
\end{enumerate}
Consequently, whenever at least one of the jumps is nonzero, $S$ is detected by the classical
Laplacian as a genuine geometric boundary/interface for the limit $n\to\infty$.
\end{corollary}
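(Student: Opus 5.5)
The plan is to reduce the corollary to Proposition~\ref{prop:distributional-Laplace-jumps} and Theorem~\ref{thm:penalization-limit}.

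\textbf{Step 1: the distributional formula.} By item~1 of Theorem~\ref{thm:penalization-limit}, $\phi_i$ and $\phi_o$ are harmonic in $\Omega_i$ and $\Omega_o$. Elliptic regularity then makes them smooth, so in particular $\phi_i\in H^2_{\mathrm{loc}}(\Omega_i)$ and $\phi_o\in H^2_{\mathrm{loc}}(\Omega_o)$. Together with the assumed existence of traces and normal traces, this is exactly the hypothesis set of Proposition~\ref{prop:distributional-Laplace-jumps}. That proposition gives
\[
\Delta\phi=[\partial_\nu\phi]_S\,\delta_S+[\phi]_S\,\partial_\nu\delta_S .
\]
For completeness I would recall how \eqref{eq:distributional-Laplace-jumps} is proved. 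Split $\int_{\R^d}\phi\,\Delta\psi$ over $\Omega_i$ and $\Omega_o$ and apply Green's second identity on each piece; the interior Laplacians vanish. The boundary terms, with the outward normal $\nu$ on $\Omega_i$ and $-\nu$ on $\Omega_o$, combine into $\int_S[\partial_\nu\phi]_S\psi-\int_S[\phi]_S\partial_\nu\psi$.

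This step carries the main technical subtlety. If $\Omega_o$ is unbounded, Green's identity must be applied on $\Omega_o\cap B_R$ with $R$ large enough that $\operatorname{supp}\psi\subset B_R$, so that no contribution from infinity appears. One must also justify the traces, which I would handle via the $H^2_{\mathrm{loc}}$ regularity up to $S$ that the assumption implicitly provides. Once the formula holds, the right-hand side is manifestly supported on $S$, which gives the first claim.

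\textbf{Step 2: the Dirichlet case.} Item~2 of Theorem~\ref{thm:penalization-limit} gives $\gamma_i\phi_i=\gamma_o\phi_o=g$, so $[\phi]_S=g-g=0$. Substituting into Step~1 leaves $\Delta\phi=[\partial_\nu\phi]_S\,\delta_S$, which is a single layer.

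\textbf{Step 3: the Neumann case.} Item~3 of Theorem~\ref{thm:penalization-limit} fixes both normal traces by $h$. In the flux-matching convention, where both one-sided normal derivatives along $\nu$ equal $h$, we get $[\partial_\nu\phi]_S=h-h=0$. Substituting gives $\Delta\phi=[\phi]_S\,\partial_\nu\delta_S$, which is a double layer. I would state explicitly that this is where the sign convention matters: if instead the outward normals of the two sides are used, the jump would be $2h$, which is why the corollary restricts to the flux-matching case.

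\textbf{Step 4: the final claim.} The final assertion is the contrapositive of the last sentence of Proposition~\ref{prop:distributional-Laplace-jumps}. If either jump is nonzero, the corresponding layer distribution is nonzero, since $\delta_S$ and $\partial_\nu\delta_S$ paired with nonzero densities are nontrivial. To see this, test against a $\psi$ localized near a point where the density is nonzero; for the double layer, choose $\psi$ with $\psi|_S=0$ but $\partial_\nu\psi\neq0$, which also separates the two terms. Hence $\phi$ is not globally harmonic, and $S$ is seen by $\Delta$ as a genuine interface.
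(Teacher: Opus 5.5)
Your proposal is correct and follows the paper's intended route: the paper gives no separate proof and presents the corollary as a direct consequence of applying Proposition~\ref{prop:distributional-Laplace-jumps} to the limit pair of Theorem~\ref{thm:penalization-limit} and substituting the enforced trace data, and your Green's-identity derivation (truncating to $\Omega_o\cap B_R$) plus the test-function argument that a nonzero layer density gives a nonzero distribution supply exactly the details the paper omits. One trivial slip in a side remark: with Neumann data taken along the outward normals ($\nu$ on $\Omega_i$, $-\nu$ on $\Omega_o$), the jump is $[\partial_\nu\phi]_S=-h-h=-2h$, not $2h$, which does not affect the argument.
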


In the next section we employ the developed theory to address the longstanding problem of non-singular point source in the potential theory.
\subsection{A non-singular model for a point source}\label{subsec:7.2}

Homothetic harmonic functions are more flexible than the classical harmonic functions due to the presence of $ \lambda $ and $ \phi _ d$, which allow local deformations. For example, from Liouville’s theorem, one can infer that any harmonic function on $\R^d$ is either constant or not bounded. But this is not true for the homothetic harmonic functions. Here, we utilise this flexibility to construct a non-singular model for a point source. In fact, there is more than one type of such regularization. We have already provided a non-singular model for a point source as a solution of Eq. \eqref{eq:Cauchy-strong}, suitable for an elementary particle (an electron in particular) \cite{Sabetghadam2025}. Here, another non-singular model is presented based on Eq. \eqref{eq:penalized-Dirichlet}, which models a hollow-sphere point source (c.f. classical electrostatics and potential theory \cite{Jackson1998,Kellogg1929}.) 

The potential of a point charge located at the origin of \( \mathbb{R}^3 \) takes the form
\[
\phi(r) = \frac{C}{r},
\]
where \( C \) is a constant and \( r = \| \mathbf{x} \| \). While this solution satisfies the Laplace equation away from the origin, it becomes singular at \( r = 0 \), and the corresponding energy,
\[
E = \frac{1}{2} \int_{\mathbb{R}^3} |\nabla \phi|^2 \, dV,
\]
diverges due to the singularity at the origin (indeed $|\nabla\phi(r)|\sim r^{-2}$ and $\int_0^1 r^{-2}\,dr=+\infty$). This infinite self-energy has been a foundational issue in classical field theory, prompting numerous attempts at regularization \cite{Jackson1998,BornInfeld1934}. Now, in the framework of homothetic harmonic functions one can avoid such singularities by enforcing the boundary data on a spherical interface rather than at a point.

We consider a spherical interface $\partial B(R) \subset \mathbb{R}^3$, and solve the Dirichlet homothetic Laplace equation \eqref{eq:penalized-Dirichlet} adapted for this problem
\begin{equation}\label{eq:dir_Lap_sphere}
\nabla^2 \phi_n + \frac{\Delta f_{R,n}}{f_{R,n}}\,(\phi_n-\phi_d)=0,
\end{equation}
where $f_{R,n}$ is the cutoff function associated with the surface $\partial B(R)$ constructed from the distance-to-$\partial B(R)$ as explained in \ref{lambda_bcs}, that is, $f_{R,n}(x)=\mathrm{dist}(x,\partial B(R))$ in a layer of thickness $\eta\varepsilon_n$ in the neighborhood of $S$, and $f_{R,n}\equiv 1$ outside a layer of thickness $\varepsilon_n$; and $\phi_d$ is the prescribed potential
\begin{equation}
\phi_d=\frac{C}{R},
\end{equation}
enforcing a Coulomb-like condition at $r=R$, where $r=\|x\|$.

\begin{remark}
It should be emphasized that in contrast to the classical point source models, the penalization layer here is supported near the surface $\partial B(R)$, not at the origin. For every fixed $R>0$ the interface is a smooth hypersurface $\partial B(R)\cong S^2$, not a single point. From the viewpoint of potential theory, this distinction is reflected in capacity: the Newtonian capacity of a sphere of radius $R$ is proportional to $R$ (in fact, $\mathrm{cap}(\partial B(R))=4\pi R$ in $\R^3$), whereas a point has zero Newtonian capacity; see, e.g., \cite{Kellogg1929,Landkof1972}.
\end{remark}

Due to the highly localized nature of the penalty term, the solution $\phi_n$ adjusts sharply near
the interface to match the boundary value, and decays as $\phi_n(r)\sim C/r$ for large $r$, consistent with the behavior of the classical potential in the far field  \cite{Jackson1998,Kellogg1929}.

The structure of the equation ensures a natural decoupling of the interior and exterior domains. In the limit $n\to\infty$, the penalization enforces the boundary condition only on $\partial B(R)$, and the interior and exterior solutions become independent (see the discussion in \S \ref{subsec:7.1}). Thus, the behavior of the solution in $\mathbb{R}^3\setminus B(R)$ is determined entirely by the enforced Dirichlet data and is unaffected by the field inside the ball. In particular, the exterior solution is the unique harmonic function on $\mathbb{R}^3\setminus \overline{B(R)}$ that decays at infinity and satisfies $\phi|_{r=R}=C/R$ \cite{Evans2010,GilbargTrudinger2001}, hence
\begin{equation}
\phi_o(r)=\frac{C}{r},\qquad r\ge R.
\end{equation}
On the other hand, the interior solution on $B(R)$ is determined solely by the same boundary condition on $\partial B(R)$; the unique regular harmonic solution in the ball satisfying $\phi|_{r=R}=C/R$ is the constant potential (by the maximum principle) \cite{Evans2010,GilbargTrudinger2001}
\begin{equation}
\phi_i(r)=\frac{C}{R},\qquad r\le R.
\end{equation}
In particular, $\phi_i$ and $\phi_o$ model a hollow spherical source of radius $R$ with the surface potential $C/R$ placed at the origin of $\R^3$, and they arise as the limiting interior/exterior fields associated with the single penalized equation \eqref{eq:dir_Lap_sphere}. However, this solution allows a jump in the normal derivative across $\partial B(R)$, in accordance with the discussion in \S \ref{subsec:7.1}.

One may now observe the behavior of the model for $R\downarrow 0$. For each fixed $R>0$, the field energy is finite, since $\nabla\phi_i\equiv 0$ in $B(R)$ and $|\nabla\phi_o(r)|=|C|\,r^{-2}$ in the exterior, hence
\begin{equation}
E =\frac{1}{2}\int_{\mathbb{R}^3}|\nabla \phi|^2\,dV
=\frac{1}{2}\int_{\R^3\setminus B(R)} \frac{C^2}{r^4}\,dV
=2\pi\,\frac{C^2}{R}<\infty\qquad (R>0).
\end{equation}
In particular, the model removes the singularity at $r=0$ for every $R>0$. Therefore, by subtraction of the single point $r=0$ from $\R^3$, the energy remains non-infinite on $\R^3 \setminus \{0\}$ for all $R$. 

\section{Conclusions}\label{sec:Conclusions}

In this paper, we have proposed an extension to classical Weyl-integrable geometry by generalizing standard linear gauge transformations into affine homothetic transformations centered around a distinguished, harmonic, scale-invariant form $\alpha_d$. By defining shifted variables to recover the linearity, we constructed a twisted Hodge--de Rham theory. We demonstrated that our twisted differential operators inherently correspond to Witten-type exact deformations, enabling us to establish a homothetic Hodge decomposition theorem for compact, oriented Riemannian manifolds.

Transitioning from the abstract geometric framework to practical PDE applications, we specialized this theory for $0$-forms to derive the scalar homothetic Laplace equation. We showed that by treating the twist parameter $\lambda$ as a fixed, non-dynamical field determined by a regularized distribution tuned to concentrate near a hypersurface, this equation acts as a powerful diffuse-interface tool for penalizing mismatches in prescribed boundary data. Specifically, setting $\lambda$ via a regularized surface delta distribution allows the single homothetic equation to enforce Cauchy, Dirichlet, or Neumann boundary conditions. In cases of incompatible Cauchy data, we proved that the penalized equations naturally relax to yield weak solutions, permitting jumps in either the potential or its normal derivative in a manner reminiscent of classical single and double layer potentials.

Furthermore, we highlighted the physical relevance of this geometric flexibility by constructing a non-singular model for a classical point source. By modeling a hollow-sphere regularization that enforces a surface potential on an interface $\partial B(R)$, our approach entirely avoids the infinite self-energy at the origin. The penalization cleanly decouples the interior and exterior domains in the limit, ensuring a constant finite interior potential while preserving the physically observable decay in the far field. 

Finally, our analytical approach to determining the highly localized profile of $\lambda$—utilizing the well-established theory of regular singular points for second-order ODEs—provides a systematic (if not quite general) approach for boundary data imposition. This methodology not only resolves foundational singularities in classical field theory but also demonstrates significant potential for future algorithmic implementations in computational physics.

\appendix
\section{Tuning $\lambda$ near the boundary}\label{lambda_bcs}

In $\S$ \ref{PDE-Appl}, $\lambda(x)$ is a given fixed function, that is, it does not have a separate dynamics. In this appendix, we find this fixed distribution in the neighborhood of the boundary, which allows us to apply our desired boundary conditions. The result has the potential for applications in computational physics. Our goal here is to provide a practical approach for obtaining sufficiently regular boundaries/interfaces, rather than a complete classification of all admissible near-boundary behavior. We employ the well-established theory of regular singular points of second-order ordinary differential equations (see, e.g., \cite{CoddingtonLevinson1955,Simmons2016}). 

Consider the one-dimensional version of the homothetic Laplace equation \eqref{eq:forced-alpha-d-harmonic-Euc}, and set $u:=\phi-\phi_d$ to get:
\begin{equation}\label{eq:A_u}
u''+2\lambda' u' + (\lambda''+(\lambda')^2)\,u=0.
\end{equation}
where $(\cdot^\prime)$ denotes differentiation with respect to $x\in (0,\infty)$, and $x=0$ is the point that we want to impose our boundary conditions. Moreover, we assumed $\phi_d$ is harmonic $\phi_d''=0$ in the neighborhood $x=0^+$.

The coefficients $\lambda'$ and $\lambda''+(\lambda')^2$ should become large as $x\to 0^+$, and vanish away from it. A standard sufficient condition for $x=0$ to be a \emph{regular singular point} of \eqref{eq:A_u} is the Fuchsian scaling
\begin{equation}\label{eq:A_fuchs}
\lambda'(x)\sim \frac{a}{x},
\qquad
\lambda''(x)+(\lambda'(x))^2 \sim \frac{a^2-a}{x^2}
\qquad (x\to 0^+),
\end{equation}
for some real parameter $a\in\R$. We interpret $a$ as a calibration parameter controlling the sharpness at the boundary. Moreover, we can vanish the lower-order terms, away from $x=0$, as follows.

Let $(\varepsilon_n)_{n\ge 1}$ be a sequence with $\varepsilon_n \downarrow 0$ defining the collar neighborhood of the origin $x=0$, and fix $\eta\in(0,1)$. Choose a family of functions $f_n:(0,\infty)\to(0,1]$ such that
\begin{equation}\label{eq:A_fn}
f_n(x)=\begin{cases}
x,& \ \ \text{for } 0<x\le \eta\,\varepsilon_n, \\
1,& \ \ \text{for } x\ge \varepsilon_n,
\end{cases}  \qquad
f_n\in C^\infty(~]0,\infty[~).
\end{equation}
Such $f_n$ are obtained by composing $x$ with a standard $C^\infty$ cutoff profile that interpolates smoothly between $x$ and $1$ on $[\eta\varepsilon_n,\varepsilon_n]$ (see e.g.\cite{Hormander1983}). We then define
\begin{equation}\label{eq:A_lambda_def}
\boxed{\ \lambda_n(x) := a\,\ln\!\bigl(f_n(x)\bigr).}
\end{equation}
In the inner part $0<x\le \eta\varepsilon_n$ where $f_n(x)=x$, we obtain the regular singular behavior, obeying the Euler--Cauchy equation
\begin{equation}\label{eq:A_euler}
u''+\frac{2a}{x}\,u' + \frac{a^2-a}{x^2}\,u=0
\qquad (0<x\le \eta\varepsilon_n),
\end{equation}
while in the outer part $x \geq \varepsilon_n$, the coefficients vanish identically and the Laplace equation $u''=0$ retrieves.

Trying a Frobenius/Euler ansatz $u(x)=C x^m$ gives the indicial equation
\[
m(m-1)+2a\,m+(a^2-a)=0,
\]
whose roots are
\[
m_1=-a,
\qquad
m_2=1-a.
\]
Hence the local behavior near $x=0^+$ is
\begin{equation}\label{eq:A_local_sol}
u(x)\sim C_1 x^{-a}+C_2 x^{\,1-a}.
\end{equation}
The parameter $a$ controls the vanishing/blow-up rates of these two branches. The branch that is selected in a given PDE application is determined by the criterion one imposes (e.g.\ boundedness, finite energy, or a weak trace requirement). If one selects a branch $u\sim x^{m}$ with exponent $m>0$, then $u(0)=0$ (suitable for the Dirichlet BCs). More generally, $u^{(k)}(0)=0$ holds provided $m>k$. For example, for the branch $u\sim x^{-a}$ one has
\[
u(0)=0 \ \text{if } a<0,\qquad
u'(0)=0 \ \text{if } a<-1,\qquad
u''(0)=0 \ \text{if } a<-2,
\]
while for the branch $u\sim x^{1-a}$ one has
\[
u(0)=0 \ \text{if } a<1,\qquad
u'(0)=0 \ \text{if } a<0,\qquad
u''(0)=0 \ \text{if } a<-1.
\]
Thus $a$ should be viewed as a tunable penalty-strength parameter, by which one can determine how many derivatives of the selected local branch vanish at the singular point.\\

\medskip
\noindent\textbf{Multi-dimensional interfaces.}
For a smooth embedded hypersurface $S\subset\R^d, d>1$, one works in a tubular neighborhood $T_{\varepsilon_n}(S)$ with distance coordinate $\xi(x)=\mathrm{dist}(x,S)$, and sets
\[
f_n(x)=F_n(\xi(x)), \qquad \lambda_n(x)=a\,\ln\!\bigl(f_n(x)\bigr),
\]
where $F_n$ is the one-dimensional profile from \eqref{eq:A_fn}. In local normal coordinates, the leading singular behavior of the homothetic coefficients is governed by the same $\xi^{-1}$ and $\xi^{-2}$ scalings as in the one-dimensional Euler--Cauchy model, while curvature and tangential contributions enter as lower-order terms. Consequently, the above one-dimensional analysis provides (at least in principle) a local procedure for tuning $a$. However, we do not assert a universal dimension-independent threshold for $a$ in full generality. Moreover, note that $\xi$ is smooth on each side in $\Omega_i$ and $\Omega_o$ (in $\S$ \ref{PDE-Appl} setting) within the tubular neighborhood of $S$.


\end{document}